\documentclass{amsart}[11pt]
\setlength{\hoffset}{-1in}\hoffset-1in
\setlength{\textwidth}{16cm}
\usepackage{amsmath, amsfonts, amsthm, amssymb}
\usepackage{subfigure}
\usepackage{verbatim}
\usepackage{hyperref}
\usepackage{color}
\usepackage{enumerate}
\usepackage[dvips]{graphicx}
\linespread{1.3}

\numberwithin{equation}{section}
\topmargin = 0pt
\voffset = -10pt
\addtolength{\textheight}{2cm}

\newtheorem{theorem}{Theorem}[section]
\newtheorem{corollary}[theorem]{Corollary}
\newtheorem{lemma}[theorem]{Lemma}
\newtheorem{proposition}[theorem]{Proposition}

\theoremstyle{definition}
\newtheorem{definition}[theorem]{Definition}
\newtheorem{remark}[theorem]{Remark}

\newtheorem{assumption}[theorem]{Assumption}

\newcommand{\ind}{1\hspace{-2.1mm}{1}}
\newcommand{\Ss}{\mathcal{S}}

\newcommand{\D}{\mathrm{d}}

\newcommand{\RR}{\mathbb{R}}
\newcommand{\Ll}{\mathcal{L}}
\newcommand{\Pp}{\mathcal{P}}
\newcommand{\Cc}{\mathcal{C}}
\newcommand{\Mm}{\mathcal{M}}
\newcommand{\Nn}{\mathcal{N}}
\newcommand{\BS}{\mathrm{BS}}
\newcommand{\E}{\mathrm{e}}
\newcommand{\eps}{\varepsilon}
\newcommand{\Ff}{\mathcal{F}}
\newcommand{\Zz}{\mathcal{Z}}
\newcommand{\EE}{\mathbb{E}}
\newcommand{\loc}{\mathrm{loc}}

\begin{document}

\title{Generalised arbitrage-free SVI volatility surfaces}
\author{Gaoyue Guo}
\address{Ecole Polytechnique Paris}
\email{guo.gaoyue@gmail.com}
\author{Antoine Jacquier}
\address{Department of Mathematics, Imperial College London}
\email{a.jacquier@imperial.ac.uk}
\author{Claude Martini}
\address{Zeliade Systems, Paris}
\email{cmartini@zeliade.com }
\author{Leo Neufcourt}
\address{Department of Statistics, Columbia University}
\email{ln2294@columbia.edu}
\thanks{The authors are indebted to Stefano De Marco for numerous remarks and comments in the early stages of this paper, and would like to thank the anonymous referees for their precise and helpful suggestions.
AJ acknowledges financial support from the EPSRC First Grant EP/M008436/1.}
\date{\today}
\begin{abstract}
In this article we propose a generalisation of the recent work by Gatheral-Jacquier~\cite{gj} 
on explicit arbitrage-free parameterisations of implied volatility surfaces. 
We also discuss extensively the notion of arbitrage freeness and Roger Lee's moment formula using the recent analysis by Roper~\cite{roper}.
We further exhibit an arbitrage-free volatility surface different from Gatheral's SVI parameterisation.
\end{abstract}

\keywords{SVI volatility surface, calendar spread arbitrage, butterfly arbitrage, static arbitrage}
\maketitle


\section{Introduction}
European option prices are usually quoted in terms of the corresponding implied volatility, 
and over the last decade a large number of papers (both from practitioners and academics) 
has focused on understanding its behaviour and characteristics.
The most important directions have been towards (i) understanding the behaviour of the implied volatility in a given model~\cite{BF, BBF, FGGS, GuSt} and (ii) deciphering its behaviour in a model-independent way, 
as in~\cite{Lee, roper, RT10}.
These results have provided us with a set of tools and methods to check
whether a given parameterisation is free of arbitrage or not. 
In particular, given a set of observed data (say European Calls and Puts for different strikes and maturities), 
it is of fundamental importance to determine a methodology ensuring that both 
interpolation and extrapolation of this data are also arbitrage-free.
Such approaches have been carried out for instance in~\cite{CarrWu, Fengler, Zeliade}.
Several parameterisations of the implied volatility surface  have now become popular, 
in particular~\cite{Gatheral:2004, Hagan, HypHyp}, albeit not ensuring absence of arbitrage.

Recently, Gatheral and Jacquier~\cite{gj} proposed a new class of implied volatility parameterisation, 
based on the previous works by Gatheral~\cite{Gatheral:2004}.
In particular they provide explicit sufficient and---in a certain sense---almost necessary conditions ensuring that such a surface is free of arbitrage.
We shall recall later the exact definition of arbitrage, and see that it can be decomposed into 
two elements: butterfly arbitrage and calendar spread arbitrage.
This new class depends on the maturity and can hence be used to model the whole volatility surface,
and not a single slice.
It also depends on the at-the-money total implied variance $\theta_t$,
and on a positive function $\varphi$ such that the total variance~$w$ as a function of time-to-maturity
$t$ and log-(forward)-moneyness $k$ is given by
$w(k,t)\equiv \theta_t \text{SVI}_{\rho}(k \varphi(\theta_t) )$,
where $\text{SVI}_{\rho}$ is the classical (normalised) SVI parameterisation from~\cite{gj},
and~$\rho$ an asymmetry parameter (essentially playing the role of the correlation between spot and volatility in stochastic volatility models).

In this work, we generalise their framework to volatility surfaces parameterised as
$w(k,t)\equiv \theta_t \Psi(k \varphi(\theta_t))$ for some (general) functions~$\varphi$, $\theta$, $\Psi$.
We obtain (Sections~\ref{sec_cal} and~\ref{sec_butter}) necessary and sufficient conditions coupling 
the functions~$\Psi$ and~$\varphi$ that preclude arbitrage.
This allows us to obtain
(i) the exact set of admissible functions $\varphi$ in the symmetric ($\rho=0$) SVI case, 
and (ii) a constraint-free parameterisation of Gatheral-Jacquier functions satisfying the conditions of~\cite{gj}.
In passing (Section~\ref{sec:nonsmooth}), we extend the class of possible functions by allowing for non-smooth implied volatility functions.
Finally (Section~\ref{sec_example}), we exhibit examples of non-SVI arbitrage-free implied volatility surfaces.


\textbf{Notations:}
We consider here European option prices with maturity $t\geq 0$ and strike $K\geq 0$, 
written on an underlying stock $S$. 
Without loss of generality we shall always assume that $S_0=1$ and that interest rates are null, 
and hence the log (forward) moneyness reads $k:=\log(K)$.
We denote by 
\begin{equation}\label{eq:BlackScholes}
\BS(K,w) = \Nn(d_+(\log(K), w)) - K \Nn(d_-(\log(K),w)),
\end{equation}
the Black-Scholes value for a European Call option 
with strike~$K$ and total variance~$w$,
where $\Nn$ denotes the Gaussian cumulative distribution function and 
$d_\pm(k,w) := -k/\sqrt{w}\pm \sqrt{w}/2$;
more generally, we shall write~$\mathrm{C}(K, t)$ for (any) European Call prices with strike~$K$ and maturity~$t$.
For any $k\in\RR, t\geq 0$, the corresponding implied volatility is denoted by $\sigma(k,t)$ and the total variance~$w$ is defined by
$w(k,t):=\sigma(k,t)^2 t$.
With a slight abuse of language (commonly accepted in the finance jargon), 
we refer to the two-dimensional map $(k,t)\mapsto w(k,t)$ as the (implied) volatility surface.
Finally, for two functions~$g$ and~$h$ not null almost everywhere, we say that $g(z)\sim h(z)$ at $z=0$ whenever $\lim_{z\to 0}g(z)/h(z)=1$.
We shall also use the notations $\RR_+:=[0,\infty)$ and $\RR^*_+:=(0,\infty)$,
and use the convention $\inf_{\emptyset}=\infty$.

\section{Absence of arbitrage and volatility parameterisations}\label{sec:Arbitrage}
This preliminary section serves several purposes: 
we first recall the very definition of `arbitrage freeness' and its characterisation in terms of implied volatility.
We then state and prove a few results (which are also of independent interest) related to this notion of arbitrage.
We finally quickly review the parameterisation proposed in~\cite{gj} and introduce an extension, which is our new contribution.

\subsection{Absence of arbitrage}
As defined in~\cite{Cox}, absence of static arbitrage corresponds to the existence of a non-negative local martingale 
(on some probability space) such that European Call options (on this local martingale) 
can be written as risk-neutral expectations of their final payoffs.
Armed with this definition, it is however not easy to check whether a given set of (Call) option prices
yields an arbitrage or not.
A more practical route follows Roper's~\cite{roper} arguments (or equivalently~\cite{gj}), 
who provide sufficient and almost\footnote{The `almost' refers to~\cite[Theorem 2.15]{roper},
where smoothness and strict positivity of the implied volatility are required.} necessary conditions for a given two-dimensional function 
(of strike and maturity) to be a proper implied volatility surface, i.e. to generate arbitrage-free European option prices.
Note that Cox and Hobson's definition~\cite{Cox} allows for strict local martingales, 
whereas Roper's framework only considers true martingales, his argument being that the implied volatility
is ill-defined for strict local martingales, in particular through the failure of Put-Call parity.
Following collateralisation arguments developed in~\cite{Cox}, the recent paper~\cite{JacquierKR}
restores Put-Call parity in strict local martingale models 
and clarifies the definition and properties of the implied volatilities 
(differently generated from Put and from Call options).
Pursuing the goal set up in~\cite{gj}, we shall exclude here in our modelling framework the strict local martingale case,
and understand `static arbitrage' as a restriction to true martingales.\footnote{For 
a true martingale~$\Ss$, 
it is easy to see that, for a fixed maturity~$T$, 
the map $K\mapsto \EE(\Ss_T - K)_+$ is decreasing, convex and tends to zero at infinity, 
properties that still hold in the strict local martingale setting.
However, as shown by Pal and Protter~\cite{Pal}, Call prices are not necessarily increasing in maturity
in strict local martingale models, and therefore the corresponding total implied variance, 
whenever defined, need not be an increasing map any longer.}
We now define these terms precisely, and refer to~\cite{roper} for full details.
\begin{definition}\label{def:NoArb}
Given a map $(K,t)\in \RR_+ \times \RR_+\mapsto C(K,t)$, we say that there is no static arbitrage if there exists a non-negative martingale $S$ on some filtered probability space $(\Omega, \Ff ,(\Ff_t)_{t\geq 0}, \mathbb{P})$ such that
$C(K, t) = \EE((S_t-K)_+\vert \Ff_0)$ for each $(K,t) \in \RR_+\times \RR_+$.
\end{definition}
Consider now a two-dimensional map $w:\RR\times \RR_+\to \RR_+$ representing a total variance surface;
it is then natural to wonder whether the Call price surface defined by 
$\RR_+ \times \RR_+ \ni (K,t) \mapsto \BS(K, w(\log(K),t))$ is free of static arbitrage.
Introduce the operator $\Ll$ acting on $\Cc^{2,1}(\RR\times \RR_+^*\to \RR_+^*)$ functions by
\begin{equation}\label{eq:OpL}
\Ll w (k,t):=\left(1-\frac{k\partial_kw(k,t)}{2w(k,t)}\right)^2-\frac{(\partial_kw(k,t))^2}{4}\left(\frac{1}{w(k,t)}+\frac{1}{4}\right)+\frac{\partial_{kk}^2w(k,t)}{2},
\quad\text{for all }k\in\RR, t>0.
\end{equation}
Note that even though $\Ll$ does not act on the second component of the function, 
we shall keep this notation for clarity.
For fixed $t>0$, the total variance~$w(k,t)$ may in principle be null for some $k\in \RR$,
which might break the well-posedness of the right-hand side of~\eqref{eq:OpL}.
However, it is easy to show that~$w(k, t)$ is strictly positive whenever~$k$ 
belongs to the support of the log stock price at time~$t$,
and the restriction $w(k,t)>0$ is therefore sensible, which is imposed in model~\eqref{eq:GenSVI} 
with Assumption~\ref{assu:SVI}(iii).
At $t=0$, the total variance is equal to zero everywhere, and the definition of the operator~$\Ll$
shall not be needed.
Roper~\cite[Theorem 2.9]{roper} proved the following theorem:
\begin{theorem}\label{thm:RoperNoArb}
If the two-dimensional map~$w:\RR\times \RR_+\to\RR_+$ satisfies
\begin{enumerate}[(i)]
\item $w(\cdot,t)$ is of class $\Cc^2(\RR)$ for each $t\geq 0$;
\item $w(k,t)>0$ for all $(k,t)\in \RR\times \RR^*_+$;
\item $w(k,\cdot)$ is non-decreasing for each $k\in\RR$;
\item for each $(k,t)\in \RR\times \RR^*_+$, $\Ll w (k,t)$ is non-negative;
\item $w(k, 0)=0$ for all $k\in\RR$;
\item $\lim_{k\uparrow \infty}d_+(k,w(k,t)) = -\infty$, for each $t>0$.
\end{enumerate}
Then the corresponding Call price surface $(K,t)\mapsto \BS(K,w(\log(K),t))$ is free of static arbitrage.
\end{theorem}
Conditions (i),  (ii) and (v) are usually easy to check.
The other conditions motivate the following weaker notions of arbitrage commonly used in practice, in the maturity and in the strike directions:
\begin{definition}\label{def:WeakArb}
Let $w:\RR\times\RR_+^*\to\RR_+$ be a two-dimensional map satisfying Theorem~\ref{thm:RoperNoArb}(i)-(ii).
\begin{itemize}
\item $w$ is said to be free of calendar spread arbitrage if Condition~(iii) in Theorem~\ref{thm:RoperNoArb} holds;
\item $w$ is said to be free of butterfly arbitrage if Condition~(iv) in Theorem~\ref{thm:RoperNoArb} holds.
\end{itemize}
\end{definition}
Butterfly arbitrage corresponds to the convexity of option prices, 
which can be read as a condition on the behaviour of the implied volatility surface
(\cite[Definition 2.3]{gj} and~\cite[Lemma 2.2]{gj}).
If $\sigma_{\loc}$ represents the (Dupire) local volatility, the relationship
$\sigma^2_{\loc}(k,t) = \partial_t w(k,t)/\Ll w (k,t)$, for all $k\in\mathbb{R}, t>0$ is now standard (see~\cite[Chapter 1, Equation (1.10)]{jimbook}).
Therefore absence of static arbitrage implies that both the numerator and the denominator are non-negative quantities.
Condition~(vi) in Theorem~\ref{thm:RoperNoArb} is called the `Large-Moneyness Behaviour' (LMB) condition,
and is equivalent to Call option prices tending to zero as the strike tends to (positive) infinity,
as proved in~\cite[Theorem 5.3]{RT10}.
The following lemma however shows that other asymptotic behaviours of~$d_+$ and~$d_-$ 
hold in full generality.
This was proved by Rogers and Tehranchi~\cite{RT10} in a general framework, 
and we include here a short self-contained proof.

\begin{lemma}\label{lem:RT10}
Let $w$ be any positive real function. Then
\begin{enumerate}[(i)]
\item $\lim_{k\uparrow \infty} d_-(k, w(k))=-\infty$;
\item $\lim_{k\downarrow -\infty} d_+(k, w(k))=+\infty$.
\end{enumerate}
\end{lemma}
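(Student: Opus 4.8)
The plan is to reduce both limits to an elementary arithmetic--geometric-mean estimate that is \emph{uniform} in the value of~$w$, so that it applies verbatim when~$w$ is replaced by the composition $k\mapsto w(k)$ for an arbitrary positive function.

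First I would treat~(i). Writing out the definition, $d_-(k,w)=-k/\sqrt{w}-\sqrt{w}/2=-\bigl(k/\sqrt{w}+\sqrt{w}/2\bigr)$. For $k>0$ the two summands $k/\sqrt{w}$ and $\sqrt{w}/2$ are strictly positive, so AM--GM gives $k/\sqrt{w}+\sqrt{w}/2\geq 2\sqrt{(k/\sqrt{w})(\sqrt{w}/2)}=\sqrt{2k}$, and crucially the right-hand side does not depend on~$w>0$. Hence $d_-(k,w(k))\leq -\sqrt{2k}$ for every $k>0$, which forces $d_-(k,w(k))\to-\infty$ as $k\uparrow\infty$, and no regularity or growth assumption on~$w$ is needed.

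For~(ii) I would exploit the symmetry $d_+(k,w)=-d_-(-k,w)$, which is immediate from the definitions. Applying~(i) to the positive function $k\mapsto w(-k)$ yields $d_-(-k,w(-k))\to-\infty$ as $k\downarrow-\infty$ (i.e.\ $-k\uparrow\infty$), so $d_+(k,w(k))=-d_-(-k,w(-k))\to+\infty$. Alternatively one repeats the estimate directly: for $k<0$, $d_+(k,w)=|k|/\sqrt{w}+\sqrt{w}/2\geq\sqrt{2|k|}$. There is no real obstacle here; the only points requiring care are keeping the bound independent of~$w$ (so it survives composition with an arbitrary positive function) and using the sign of~$k$ correctly so that AM--GM is applied to two positive quantities.
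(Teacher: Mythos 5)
Your proof is correct and follows essentially the same route as the paper: both parts reduce to the AM--GM bound $k/\sqrt{w}+\sqrt{w}/2\geq\sqrt{2k}$ (resp.\ its mirror for $k<0$), which is uniform in $w>0$ and hence valid for any positive function $w(\cdot)$. The symmetry remark $d_+(k,w)=-d_-(-k,w)$ is a nice alternative packaging of (ii) but is not needed.
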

\begin{proof}
The arithmetic-geometric mean inequality reads
$-d_-(k, w(k)) =  \frac{k}{\sqrt{w(k)}}+\frac{\sqrt{w(k)}}{2}\geq \sqrt{2k}$, when $k>0$, 
which implies (i), and (ii) follows using
$d_+(k, w(k)) =  \frac{-k}{\sqrt{w(k)}}+\frac{\sqrt{w(k)}}{2}\geq \sqrt{-2k}$, when $k<0$.
\end{proof}

The missing statements in Lemma \ref{lem:RT10} are the LMB Condition (Condition~(vi) in Theorem~\ref{thm:RoperNoArb}) and
the Small-Moneyness Behaviour (SMB):
$\lim_{k\downarrow -\infty} d_-(k, v(k))=+\infty$.
To investigate
further, let us remark that the framework developed in~\cite{roper} encompasses situations where the underlying stock price can be null with positive probability. 
This can indeed be useful to model the probability of default of the underlying. 
Computations similar in spirit to~\cite{roper} show that the marginal law of the stock price at some fixed time $t>0$ 
has no mass at zero if and only if
$\lim_{K \downarrow 0}\partial_K\mathrm{C}(K, t) = -1$,
which is a statement about a 'small-moneyness' behaviour.
This can be fully recast in terms of implied volatility, and the above missing conditions then come naturally into play
in the following proposition, the proof of which is postponed to Appendix~\ref{sec:prop:nomass}:

\begin{proposition} \label{prop:nomass} (Symmetry under small-moneyness behaviour)
Let~$v$ be a $\Cc^2(\RR)$ real function satisfying
\begin{enumerate}[(I)]
\item $v(k)>0$ and $\Ll v (k) \geq 0$ for all $k \in \mathbb{R}$;
\item $\lim_{k\downarrow -\infty} d_-(k, v(k))=+\infty$ (SMB Condition);
\item $\lim_{k \uparrow \infty} d_+(k, v(k)) = -\infty$ (LMB Condition).
\end{enumerate}
Define the two functions~$p_-$ and~$p_+$ by
$k\mapsto p_{\pm}(k) := \left(2 \pi v(k)\right)^{-1/2}\exp\left(-\frac{1}{2}d_{\pm}^2(k,v(k))\right) \Ll v(k)$.
Then
\begin{enumerate}
\item $p_{+}$ and $p_-$ define two densities of probability measures on $\RR$ with respect to the Lebesgue measure, 
i.e. $\int_{\RR} p_-(k) \D k=\int_{\RR} p_+(k) \D k=1$;
\item $p_+(k) = \E^{k} p_-(k)$, so that $\int_{-\infty}^\infty \E^{k} p_-(k) \D k=\int_{-\infty}^\infty \E^{-k} p_+(k) \D k= 1$;
\item $p_-$ is the density of probability associated to Call option prices with implied volatility $v$,
in the sense that 
$p_-(k)\equiv \E^{k}\partial^2_{KK} \BS(K, v(\log(K)))|_{K=\E^{k}}$, 
and $k\mapsto p_+(-k)$ is the density of probability associated to Call option prices with implied volatility 
$k\mapsto w(k):=v(-k)$.
\end{enumerate}
\end{proposition}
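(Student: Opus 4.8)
The plan is to reduce the whole statement to the analytic study of the single function $C(K):=\mathrm{C}_{\BS}(K,v(\log K))$ on $(0,\infty)$, and then to read off $p_-$ and $p_+$ as the densities, in log-strike, of the probability measure $C''(K)\,\D K$ under, respectively, the original measure and the ``share'' measure. Claim~(2) comes almost for free: a one-line computation gives $\varphi(d_+(k,w))=\E^k\varphi(d_-(k,w))$ for all $k\in\RR$ and $w>0$, where $\varphi$ is the standard Gaussian density, since $\varphi(d_+)/\varphi(d_-)=\exp\bigl(-\tfrac12(d_+-d_-)(d_++d_-)\bigr)=\exp\bigl(-\tfrac12\sqrt w\,(-2k/\sqrt w)\bigr)=\E^k$. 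Writing $p_\pm(k)=\varphi(d_\pm(k,v(k)))\,\Ll v(k)/\sqrt{v(k)}$, this is exactly $p_+(k)=\E^k p_-(k)$, and the two integral identities in~(2) will follow from~(1).

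I would then record the elementary properties of $C$. A direct differentiation — the computation underlying \cite[Lemma~2.2]{gj}, for which the operator~\eqref{eq:OpL} is tailor-made — identifies $p_-$ with the risk-neutral density, in log-strike, of the call prices $C$; in particular $\int_a^b p_-(k)\,\D k=C'(\E^b)-C'(\E^a)$ for $0<a<b$, which is the content of the first part of~(3). Assumption~(I) makes this density non-negative, so $C$ is convex on $(0,\infty)$; the elementary bounds $(1-K)^+\le\mathrm{C}_{\BS}(K,w)<1$ force $C(0^+)=1$; and assumption~(III), via \cite[Theorem~5.3]{RT10}, gives $C(\infty)=0$, whereupon convexity together with $C\ge0$ forces $C$ to be non-increasing with $C'(\infty)=0$.

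The heart of the matter, and the only place where SMB (assumption~(II)) is used, is the boundary derivative $C'(0^+)=-1$. Writing $1-C(\E^k)=\overline\Phi(d_+)+\E^k\Phi(d_-)$ with $d_\pm:=d_\pm(k,v(k))$, $\Phi$ the Gaussian c.d.f.\ and $\overline\Phi:=1-\Phi$, one has $\E^{-k}\bigl(1-C(\E^k)\bigr)=\E^{-k}\overline\Phi(d_+)+\Phi(d_-)$. By~(II), $d_-\to+\infty$ as $k\downarrow-\infty$, so $\Phi(d_-)\to1$ and also $d_+=d_-+\sqrt{v(k)}\to+\infty$; combining the Mills bound $\overline\Phi(x)\le\varphi(x)/x$ for $x>0$ with the identity $\varphi(d_+)=\E^k\varphi(d_-)$ gives $0\le\E^{-k}\overline\Phi(d_+)\le\varphi(d_-)/d_+\to0$. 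Hence $\E^{-k}\bigl(1-C(\E^k)\bigr)\to1$, and since $C$ is convex on $(0,\infty)$ and continuous at $0$ this limit equals $-C'(0^+)$, so $C'(0^+)=-1$. This is the implied-volatility translation of ``$S$ has no atom at $0$''; the delicate point is that $v$, hence $d_+$, may grow, so that without~(II) the Gaussian decay of $\overline\Phi(d_+)$ need not beat the $\E^{-k}$ blow-up — which is precisely what SMB rules out, and I expect this step to be the only real obstacle.

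It then remains to assemble the claims. From the previous steps, $\mu(\D K):=C''(K)\,\D K$ is a probability measure on $(0,\infty)$, of total mass $C'(\infty)-C'(0^+)=1$, with $\int_0^\infty K\,\mu(\D K)=C(0^+)-C(\infty)=1$; a short integration by parts (using $\int K\,\mu(\D K)<\infty$) then yields $C(K)=\int_0^\infty(s-K)^+\,\mu(\D s)$, so $C$ is a genuine zero-rate call-price function with underlying law $\mu$. Consequently $p_-$ is the $\log S$-density under $\mu$ — claim~(1) for $p_-$ — and $p_+(k)=\E^k p_-(k)$ integrates to $\int\E^k p_-(k)\,\D k=\int s\,\mu(\D s)=1$, being the $\log S$-density under the share measure $s\,\mu(\D s)$; claim~(1) for $p_+$. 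The identities in~(2) then read $\int\E^k p_-=\int s\,\mu(\D s)=1$ and $\int\E^{-k}p_+=\int p_-=1$. Finally, the second half of~(3) follows by applying the already-proven part of the proposition to the reflected surface $w(k):=v(-k)$: since only $w''$, $(w')^2$ and $kw'$ enter~\eqref{eq:OpL}, one checks $\Ll w(k)=\Ll v(-k)$, and $d_\pm(k,w(k))=-d_\mp(-k,v(-k))$, so $w$ again satisfies~(I)--(III) and its associated density $p^w_-$ is the $\log S$-density of the call prices with implied volatility $w$; but $p^w_-(k)=\varphi(d_+(-k,v(-k)))\,\Ll v(-k)/\sqrt{v(-k)}=p_+(-k)$, which is exactly the assertion. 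Everything apart from $C'(0^+)=-1$ is the chain-rule computation already in \cite{gj} together with standard convexity arguments.
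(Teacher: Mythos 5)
Your proof is correct, but it reaches the key conclusion $\int_{\RR}p_-=1$ by a genuinely different route than the paper. The paper works with the explicit primitive $D(k)=\partial_K\mathrm{C}_{\BS}(K,v(\log K))|_{K=\E^k}=\frac{v'(k)}{2\sqrt{2\pi v(k)}}\E^{-d_-^2/2}-\Nn(d_-)$ and spends most of its effort proving that the first term vanishes as $k\to\pm\infty$; this forces it to control $v'$, invoking the bounds $v'(k)<\sqrt{2v(k)/k}$ and $v'(k)>-4$ from~\cite{RT10} together with mean-value-theorem subsequence arguments in four separate sign cases. You never touch $v'$: you extract $\lim_{K\uparrow\infty}C'(K)=0$ and $C'(0^+)=-1$ purely from the convexity of $K\mapsto C(K)$ (which is assumption~(I)), the boundary values $C(0^+)=1$ and $C(\infty)=0$, and a Mills-ratio estimate combined with $\varphi(d_+)=\E^k\varphi(d_-)$ — and this last step is exactly where SMB enters, making transparent that $C'(0^+)=-1$ is the implied-volatility translation of ``no atom at zero'', a point the paper asserts in the main text but does not actually exploit in its proof. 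The trade-off is mild: both arguments defer the chain-rule identification of $\Ll v$ with the second strike-derivative to~\cite[Lemma 2.2]{gj}, and both lean on~\cite{RT10} somewhere (you for $C(\infty)=0$ via their Theorem~5.3, though this also follows elementarily from LMB and Lemma~\ref{lem:RT10}(i)); but your version is more self-contained on the analytic side and yields as a by-product the representation $C(K)=\int(s-K)^+\mu(\D s)$, which the paper only obtains afterwards through the Brownian-martingale construction. The treatment of $p_+=\E^kp_-$ and of part~(3) by reflection $w(k)=v(-k)$, with $\Ll w(k)=\Ll v(-k)$ and $d_\pm(k,w(k))=-d_\mp(-k,v(-k))$ swapping SMB and LMB, coincides with the paper's. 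The only places where you should add a line in a written-up version are the vanishing of the boundary terms $KC'(K)$ at $0$ and $\infty$ in the integration by parts (both follow from convexity, $|C'|\leq 1$ and $C(\infty)=0$), which you currently wave at with ``a short integration by parts''.
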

The strict positivity of the function~$v$ in Assumption~(I) ensures that the support of the underlying distribution is the whole real line.
One could bypass this assumption by considering  finite support as in~\cite{RT10}.
In the latter---slightly more general---case, the statements and proofs would be very analogous but much more notationally inconvenient.
Symmetry properties of the implied volatility have been investigated in the literature, and we refer the interested reader
to~\cite{CarrLee, GuliSym, RT}.
This proposition has been intentionally stated in a maturity-free way: it is indeed a purely `marginal' or cross-sectional statement,
which does not depend on time. 
A natural question arises then: can such a function $v$, satisfying the assumptions of Proposition~\ref{prop:nomass},
represent the total implied variance smile at time~$1$ associated to some martingale 
(issued from~$1$ at time zero)? 
The answer is indeed positive and this can be proved as follows.
Consider the natural filtration $\mathbb{B}$ of a standard (one-dimensional) 
Brownian motion~$(B_t)_{t\geq0}$. 
Let~$P$ be the cumulative distribution function associated to $p_-$ characterised in Proposition~\ref{prop:nomass}, 
and~$\Nn $ the Gaussian cumulative distribution function. 
Then the random variable $X := P^{-1}(\Nn (B_1))$ has law~$P$, and~$E(X)=1$. 
Set now $M_s := E(X|\mathbb{B}_s)$, then $M$ is a martingale issued from~$1$. 
Note that $M$ is even a Brownian martingale and therefore a continuous martingale. 
The associated Call option prices $E[(M_s -K)_+]$ uniquely determine a total implied variance surface~$(t,k)\mapsto w(k,t)$ 
such that $v=w(1,\cdot)$.

\subsection{Volatility parameterisations}
In~\cite{Gatheral:2004}, Gatheral proposed a parameterisation for the implied volatility, the now famous SVI (`Stochastic Volatility Inspired').
However, finding necessary and sufficient conditions preventing static arbitrage have been inconclusive so far.
Recently, Gatheral and Jacquier~\cite{gj} extended this approach and introduced the following parameterisation for the total implied variance~$w$:
\begin{equation}\label{eq:SVIW}
w(k,t) \equiv \frac{\theta_t}{2}\left\{1+\rho k\varphi(\theta_t)+\sqrt{(k\varphi(\theta_t)+\rho)^2+(1-\rho^2)}\right\},
\end{equation}
with $\theta_t>0$ for $t>0$ and $\varphi$ is a smooth function from~$\RR_+^*$ to~$\RR_+$ 
and $\rho\in (-1,1)$.
The main result in their paper (Corollary 5.1) is the following theorem, 
which provides sufficient conditions for the implied volatility surface~$w$ to be free of static arbitrage:

\begin{theorem} \label{theogj}
The surface~\eqref{eq:SVIW} is `free of static arbitrage' if the following conditions are satisfied:
\begin{enumerate}
  \item $\partial_t \theta_t\geq 0$ for all $t>0$;
  \item $\varphi(\theta)+\theta\varphi'(\theta)\geq 0$ for all $\theta>0$;
  \item $\varphi'(\theta) < 0$ for all $\theta>0$;
  \item $\theta\varphi(\theta)(1+|\rho|) < 4$ for all $\theta>0$;
  \item $\theta\varphi(\theta)^2(1+|\rho|)\leq 4$ for all $\theta>0$.
\end{enumerate}
\end{theorem}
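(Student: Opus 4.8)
The plan is to verify the three conditions of Roper's characterisation recalled above: absence of calendar spread arbitrage (Definition~\ref{def:Calendar}), absence of butterfly arbitrage (Definition~\ref{def:Butterfly}), and a null time-zero smile. It is convenient to rewrite~\eqref{eq:SVIW} as $w(k,t)=\theta_t\,\mathrm{SVI}_\rho\big(k\varphi(\theta_t)\big)$, where
\[
\mathrm{SVI}_\rho(z):=\tfrac{1}{2}\Big(1+\rho z+\sqrt{(z+\rho)^2+1-\rho^2}\Big)
\]
is the normalised SVI function: it is smooth, strictly positive and strictly convex, with $\mathrm{SVI}_\rho''(z)=\tfrac{1}{2}(1-\rho^2)\big((z+\rho)^2+1-\rho^2\big)^{-3/2}>0$ and $|\mathrm{SVI}_\rho'(z)|<\tfrac{1}{2}(1+|\rho|)$. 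The time-zero smile is null by the standing convention $\theta_{0+}=0$ (by condition~(2) the map $\theta\mapsto\theta\varphi(\theta)$ is nondecreasing, so $\theta_t\varphi(\theta_t)$ admits a limit as $t\downarrow 0$, which we take to be zero), so only the two arbitrage conditions require work.

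\emph{Calendar spread.} Since $t$ enters~\eqref{eq:SVIW} only through $\theta_t$, the chain rule gives $\partial_t w(k,t)=\partial_t\theta_t\cdot g'(\theta_t)$ with $g(\theta):=\theta\,\mathrm{SVI}_\rho(k\varphi(\theta))$; by condition~(1) it suffices to prove $g'\geq 0$ for each fixed $k$. Writing $z:=k\varphi(\theta)$ and $c:=\theta\varphi'(\theta)/\varphi(\theta)$, one computes $g'(\theta)=\mathrm{SVI}_\rho(z)+c\,z\,\mathrm{SVI}_\rho'(z)$, and conditions~(2)--(3) give $c\in[-1,0]$. As the right-hand side is affine in $c$, it is bounded below by $\min\big(\mathrm{SVI}_\rho(z),\,h(z)\big)$ where $h(z):=\mathrm{SVI}_\rho(z)-z\,\mathrm{SVI}_\rho'(z)$. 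The first quantity is positive; and $h'(z)=-z\,\mathrm{SVI}_\rho''(z)$ shows that $h$ is maximised at $z=0$ with $h(0)=1$ and tends to $\tfrac{1}{2}(1\pm\rho)$ as $z\to\pm\infty$, hence $h\geq\tfrac{1}{2}(1-|\rho|)>0$. Therefore $g'>0$ and the surface is free of calendar spread arbitrage.

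\emph{Butterfly, Part I.} As $k\uparrow\infty$ we have $w(k,t)\sim\tfrac{1}{2}(1+\rho)\,\theta_t\varphi(\theta_t)\,k$, so $d_+(k,w(k,t))=\tfrac{1}{2}\sqrt{w(k,t)}-k/\sqrt{w(k,t)}\to-\infty$ precisely when $\theta_t\varphi(\theta_t)(1+\rho)<4$, which is implied by condition~(4) (the opposite limit $d_+\to+\infty$ as $k\downarrow-\infty$ is automatic by Lemma~\ref{lem:RT10}). For the positivity of $\Ll w$, substituting $w=\theta_t\,\mathrm{SVI}_\rho(z)$ with $z=k\varphi(\theta_t)$ into~\eqref{eq:OpL} and setting $A:=\theta_t\varphi(\theta_t)$, $B:=\theta_t\varphi(\theta_t)^2$, a direct computation collapses all the scaling and yields the identity
\[
\Ll w(k,t)=\left(1-\frac{z\,\mathrm{SVI}_\rho'(z)}{2\,\mathrm{SVI}_\rho(z)}\right)^{2}-\frac{A^{2}}{16}\,\mathrm{SVI}_\rho'(z)^{2}+\frac{B}{2}\left(\mathrm{SVI}_\rho''(z)-\frac{\mathrm{SVI}_\rho'(z)^{2}}{2\,\mathrm{SVI}_\rho(z)}\right).
\]
The first term is nonnegative; the bound $|\mathrm{SVI}_\rho'|<\tfrac{1}{2}(1+|\rho|)$ together with condition~(4) keeps the $A$-term above $-\tfrac14$, consistently with the asymptotics $\Ll w\to\tfrac{1}{64}\big(16-A^{2}(1\pm\rho)^2\big)$ as $z\to\pm\infty$.

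\emph{Butterfly, Part II.} It remains to handle the last term, which may be negative (at $z=0$ it equals $\tfrac{1}{2}(1-2\rho^2)$, negative once $|\rho|>1/\sqrt2$) and must be absorbed using condition~(5) through $B\leq 4/(1+|\rho|)$. Multiplying the displayed identity through by the positive quantities $\mathrm{SVI}_\rho(z)$ and $\big((z+\rho)^2+1-\rho^2\big)^{3/2}$ turns the inequality $\Ll w\geq 0$ into the nonnegativity of an explicit polynomial expression in $z$ — equivalently, after the substitution $s:=\sqrt{(z+\rho)^2+1-\rho^2}$, a polynomial in $(z+\rho,s)$ subject to $s^2=(z+\rho)^2+1-\rho^2$ — whose coefficients are governed by $A(1+|\rho|)$ and $B(1+|\rho|)$. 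I expect this polynomial positivity to be the main obstacle: it is naturally attacked by splitting into the regions where $\mathrm{SVI}_\rho''(z)-\mathrm{SVI}_\rho'(z)^2/(2\,\mathrm{SVI}_\rho(z))$ is of each sign and then completing squares after homogenising, and it is precisely here that both constraints~(4) and~(5) are genuinely used. Combining the three parts with Roper's criterion then yields that~\eqref{eq:SVIW} is free of static arbitrage.
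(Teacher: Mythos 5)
There is a genuine gap. First, note that the paper does not prove Theorem~\ref{theogj} at all: it is quoted verbatim from~\cite{gj} (their Corollary 5.1), so there is no internal proof to compare against; your attempt must therefore stand on its own. The reduction to Roper's three conditions, the calendar-spread argument (writing $g'(\theta)=\mathrm{SVI}_\rho(z)+c\,z\,\mathrm{SVI}_\rho'(z)$ with $c=\theta\varphi'(\theta)/\varphi(\theta)\in[-1,0]$ from conditions (2)--(3), minimising the affine function at the endpoints, and bounding $h(z)=\mathrm{SVI}_\rho(z)-z\,\mathrm{SVI}_\rho'(z)\geq\tfrac12(1-|\rho|)$ via $h'=-z\,\mathrm{SVI}_\rho''$), the verification of the LMB condition from the slope bound in condition (4), and the identity for $\Ll w$ in terms of $A=\theta\varphi(\theta)$ and $B=\theta\varphi(\theta)^2$ (which agrees with~\eqref{condfly}) are all correct.

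The problem is that the central and by far hardest claim of the theorem --- that conditions (4) and (5) force $\Ll w(k,t)\geq 0$ for \emph{all} $k$, not just asymptotically --- is never established. Your ``Part II'' replaces the argument with a research plan: you observe that the term $\tfrac{B}{2}\bigl(\mathrm{SVI}_\rho''-\mathrm{SVI}_\rho'^2/(2\,\mathrm{SVI}_\rho)\bigr)$ can be negative, propose clearing denominators to reduce to a polynomial positivity in $(z+\rho,s)$, and then write ``I expect this polynomial positivity to be the main obstacle,'' sketching how one might attack it. That is exactly the content of the theorem: whether the specific constants $4$ in conditions (4) and (5) suffice to absorb the negative contribution uniformly in $z$ is the non-trivial assertion, and it is precisely what Gatheral and Jacquier's Lemmas 5.1--5.2 and Theorem 5.2 are devoted to proving (via a careful analysis of the function $g(z)=(1-z\Psi'/(2\Psi))^2-\ldots$, not a one-line completion of squares). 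Without carrying out that estimate --- or at least exhibiting the explicit polynomial and verifying its nonnegativity under $A(1+|\rho|)<4$ and $B(1+|\rho|)\leq 4$ --- the proof is incomplete, and it is not evident that your proposed square-completion strategy closes without further ideas.
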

A few remarks are in order here:
\begin{enumerate}
\item the conditions in Theorem~\ref{theogj} are sufficient, but not necessary;
\item the full characterisation of the functions $\varphi$ guaranteeing absence of (static or not) arbitrage 
in the symmetric SVI case $\rho=0$ is left open;
\item it would be useful to `parameterise' the set of functions $\varphi$ satisfying the conditions of Theorem~\ref{theogj}. 
This could lead to easy-to-implement calibration
algorithms among the whole admissible class, without being tied to a particular family as in~\cite{gj}.
\end{enumerate}

In this paper, we try to settle all these points, and state our results in a more general framework, 
not tied to the specific shape of the SVI model, 
by considering implied volatility surfaces of the form
\begin{equation}\label{eq:GenSVI}
w(k,t)=\theta_t\Psi(k \varphi(\theta_t)),
\quad\text{for all }k\in\RR, t\geq 0,
\end{equation}
together with the following assumptions:
\begin{assumption}\label{assu:SVI}\text{}
\begin{enumerate}[(i)]
\item $\theta \in \Cc^1(\RR_+^*\to \RR_+^*)$, is not constant, 
$\lim_{t\downarrow 0} \theta_t = 0$, and $\theta_\infty:=\lim_{t\uparrow\infty}\theta_t$ is well defined in~$(0,\infty]$;
\item $\varphi \in \Cc^1(\RR_+^*\to \RR_+^*)$, 
and $\lim_{u\uparrow\infty}\varphi(u)$ 
is well defined in $(0,\infty]$;
\item $\Psi \in \Cc^2(\RR\to \RR_+^*)$
with $\Psi(0)=1$ and $\Psi$ is not constant;
\item for any $k\in\RR$, $\lim_{t\downarrow 0}w(k,t) =0$.
\end{enumerate}
\end{assumption}
The time-dependent function~$\theta$ models the at-the-money total variance; 
the assumption on its behaviour at the origin is thus natural. 
A constant function~$\Psi$ corresponds to deterministic time-dependent volatility, 
a trivial case we rule out here.
Likewise, were~$\theta$ assumed to be constant, it would be null everywhere, 
which we shall also not consider.
Assumption~(iv) ensures that at maturity, European Call option prices are equal to their payoffs. 
We can recast it in terms of assumptions on~$\varphi$ and~$\Psi$, for example:\\
\textbf{Assumption (iv')}: $\varphi(\theta)$ converges to a non-negative constant as $\theta\downarrow 0$.\\
Indeed (iv'), together with (iii), clearly implies (iv).
We shall present another alternative below with the help of the `asymptotic linear' property of~$\Psi$ 
(Definition~\ref{def:Linear} and Assumption~\ref{assu:payoffBis}).
Assumption~(iii) may look strong from a purely theoretical point of view, but is always satisfied in practice.
In Section~\ref{sec:nonsmooth} though, we partially relax it (Assumption~\ref{assu:SVIweak}) 
to allow for possible kinks.
The main goal here is to provide sufficient conditions on the triplet~$(\theta, \varphi, \Psi)$ that will guarantee absence of static arbitrage.
Note that the SVI parameterisation~\eqref{eq:SVIW}
corresponds to the case $\Psi(z)\equiv \frac{1}{2} (1+\rho z +\sqrt{z^2+2\rho z+1})$,
which clearly satisfies Assumption~\ref{assu:SVI}(iii).
In the sequel, we shall refer to this case as the SVI case.
The next sections provide necessary and sufficient conditions on $\theta, \varphi$ and $\Psi$ 
to prevent static arbitrage.

\section{Elimination of calendar spread arbitrage} \label{sec_cal}
We first concentrate on determining (necessary and sufficient) conditions on the triplet~$(\theta, \varphi, \Psi)$
to eliminate calendar spread arbitrage.

\subsection{The first coupling condition}
The quantity $\partial_t w(k,t)$ in Definition~\ref{def:WeakArb} is nothing else than the numerator of the local volatility expressed 
in terms of the implied volatility, i.e. Dupire's formula (see~\cite{jimbook}).
Define now the functions~$F:\RR\to\RR$ and~$f:\RR\to\RR$~by
\begin{equation}\label{eq:Ff}
F(z):=z\frac{\Psi'(z)}{\Psi(z)},\qquad
f(u):=u\frac{\varphi'(u)}{\varphi(u)}.
\end{equation}
They will play a major role in our analysis, and 
Assumption~\ref{assu:SVI}(iii) implies that $F(z) \sim \Psi'(0) z/\Psi(0)$ at the origin and $F(0)=0$.
Note that $\Psi$ and $\varphi$ can be recovered through the identities
$$
\Psi(z)=\exp\left({\int_0^z \frac{F(u)}{u}}\D u\right),
\qquad
\varphi(u)=\varphi(r)\exp\left({\int_{r}^{u}\frac{f(v)}{v}}\D v\right),
$$
for some arbitrary constant~$r>0$.
The following proposition gives new conditions for absence of calendar spread arbitrage.
\begin{proposition}[First coupling condition] \label{prop:first_coupling}
The surface~\eqref{eq:GenSVI} is free of calendar spread arbitrage if and only if the following two conditions hold:
\begin{enumerate}[(i)]
\item $\theta$ is non-decreasing;
\item $1+F(z)f(u)\geq 0$ for any $z\in\RR$ and $u\in(0, \theta_\infty)$.
\end{enumerate}
\end{proposition}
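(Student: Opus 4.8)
The plan is to compute $\partial_t w(k,t)$ explicitly from the parameterisation~\eqref{eq:GenSVI} and rewrite the sign condition in terms of the functions $F$ and $f$ from~\eqref{eq:Ff}. Writing $w(k,t)=\theta_t\Psi(k\varphi(\theta_t))$ and differentiating in $t$ via the chain rule, I would get
$$
\partial_t w(k,t) = \dot\theta_t\,\Psi(k\varphi(\theta_t)) + \theta_t\,\Psi'(k\varphi(\theta_t))\,k\,\varphi'(\theta_t)\,\dot\theta_t
= \dot\theta_t\Big[\Psi(z) + z\,\Psi'(z)\,\frac{\theta_t\varphi'(\theta_t)}{\varphi(\theta_t)}\Big]\Big|_{z=k\varphi(\theta_t)},
$$
where I have multiplied and divided by $\varphi(\theta_t)$ to produce the combination $z=k\varphi(\theta_t)$ inside, and isolated $\theta_t\varphi'(\theta_t)/\varphi(\theta_t)=f(\theta_t)$. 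Since $\Psi>0$ by Assumption~\ref{assu:SVI}(iii), I can factor out $\Psi(z)>0$, obtaining
$$
\partial_t w(k,t) = \dot\theta_t\,\Psi(z)\,\big(1 + F(z) f(\theta_t)\big), \qquad z = k\varphi(\theta_t).
$$
Absence of calendar spread arbitrage (Definition~\ref{def:Calendar}) asks exactly that this be $\geq 0$ for all $k\in\RR$, $t>0$.

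Next I would argue the equivalence. For the "if" direction: given (i) and (ii), $\dot\theta_t\geq 0$ and $\Psi(z)>0$, and the bracket $1+F(z)f(\theta_t)\geq 0$ holds because for each fixed $t>0$ we have $\theta_t\in(0,\theta_\infty)$ (using $\theta_0=0$, $\theta$ non-constant and non-decreasing, so $\theta_t>0$ for $t$ large and in general $\theta_t\le\theta_\infty$), hence the pair $(z,\theta_t)$ ranges over a subset of $\RR\times(0,\theta_\infty)$ as $k$ varies; so the product is $\geq 0$. For the "only if" direction: I would show each of (i), (ii) is forced. For (i), note that for $k=0$ we get $\partial_t w(0,t)=\dot\theta_t\Psi(0)=\dot\theta_t\geq 0$, so $\theta$ is non-decreasing. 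For (ii), fix any $u\in(0,\theta_\infty)$; by continuity of $\theta$ with $\theta_0=0$ and $\theta$ non-decreasing with limit $\theta_\infty$, there exists $t_0>0$ with $\theta_{t_0}=u$ (intermediate value theorem). If moreover $\dot\theta_{t_0}>0$ we can divide and get $1+F(z)f(u)\geq 0$ for all $z=k\varphi(u)$, i.e. all $z\in\RR$ since $\varphi(u)>0$. The subtlety is handling $t$ where $\dot\theta_t=0$: there the constraint is vacuous at that single point, so I would instead argue that the set of $u\in(0,\theta_\infty)$ that are attained at some $t_0$ with $\dot\theta_{t_0}>0$ is dense (a non-decreasing non-constant $\mathcal C^1$ function cannot have $\dot\theta\equiv 0$ on any interval whose image has nonempty interior), and then pass to the limit using continuity of $F(z)f(\cdot)$ in the second variable — $f$ is continuous on $(0,\infty)$ by Assumption~\ref{assu:SVI}(ii).

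The main obstacle is precisely this last point: the clean "divide by $\dot\theta_t$" argument fails on the (possibly large) set $\{\dot\theta_t=0\}$, so establishing that condition (ii) must hold for \emph{every} $u\in(0,\theta_\infty)$ — not just for the values where $\theta$ is strictly increasing — requires a density/continuity argument. I expect one has to be slightly careful about whether $\theta_\infty$ itself is included (it is not, matching the half-open interval $(0,\theta_\infty)$ in the statement) and about the endpoints: as $t\downarrow 0$, $\theta_t\downarrow 0$ so small values of $u$ are reached; as $t\uparrow\infty$, $\theta_t\uparrow\theta_\infty$ but may not attain it. A short remark reconciling the range of $(z,\theta_t)$ with the quantifier "$z\in\RR$ and $u\in(0,\theta_\infty)$" closes the proof.
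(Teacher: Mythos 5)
Your proposal is correct and follows essentially the same route as the paper: compute $\partial_t w$, factor out $\Psi(z)>0$, and reduce absence of calendar spread arbitrage to $\theta_t'\left(1+F(z)f(\theta_t)\right)\geq 0$, with $k=0$ forcing (i). In fact you are more careful than the paper on the ``only if'' direction --- the published proof simply asserts that (ii) is then necessary, whereas your density-plus-continuity argument for handling the set $\{\theta_t'=0\}$ (and for covering all $u\in(0,\theta_\infty)$ via the intermediate value theorem) fills a gap the paper leaves implicit.
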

\begin{proof}
By Definition~\ref{def:WeakArb}, the surface defined by~\eqref{eq:GenSVI} is free of calendar spread arbitrage if and only if
\begin{equation} \label{calcond}
\partial_tw(k,t)=\theta_t'\Psi(z)+\theta_t\Psi'(z)k\varphi'(\theta_t)\theta_t'\geq 0,
\quad\text{for all }k\in\mathbb{R}, t>0,
\end{equation}
where $z:=k\varphi(\theta_t)$. 
Since $\Psi$ is strictly positive by Assumption~\ref{assu:SVI}(iii), the inequality~\eqref{calcond} is equivalent to
$\theta'_t\left(1+F(z)f(\theta_t)\right)\geq 0$ for all  $z\in\mathbb{R}$, $t>0$, 
with~$F$ and~$f$ defined in~\eqref{eq:Ff}.
For $k=0$ we get $\theta'_t\geq0$ for all $t>0$. 
Otherwise~(ii) is necessary and sufficient for the surface to be free of calendar spread arbitrage.
\end{proof}

\begin{remark}
We do not assume here that $\theta_\infty$ is infinite. 
In most popular stochastic volatility models with or without jumps, $\theta_\infty$ is infinite.
Rogers and Tehranchi~\cite{RT10} showed that for a non-negative martingale~$(S_t)_{t\geq 0}$ 
the equality $\theta_\infty=\infty$ is equivalent to the almost sure equality $\lim_{t\uparrow\infty}S_t=0$ (where the limit exists by the martingale convergence theorem).
However, it may occur that $\theta_\infty<\infty$. 
As a corollary of coupling properties of stochastic volatility models, Hobson~\cite{Hobson} provides instances where such a phenomenon appears, for example the SABR~\cite{Hagan} model with $\beta=1$.
\end{remark}

\begin{remark}\label{rem:F}
Condition~(ii) in Proposition~\ref{prop:first_coupling} can be stated in a more compact way:
$$1-\sup{F_+} \sup{f_-}\geq0
\qquad\text{and}\qquad
1-\sup{F_-} \sup{f_+}\geq0,$$
where $f_+:=\max(f,0)$ and $f_-:=\max(-f, 0)$.
\end{remark}

Motivated by the celebrated moment formula in~\cite{Lee} (see also Theorem~\ref{thm:LeeMoment}), 
which forces the function~$\Psi$ to be at most linear at (plus/minus) infinity, 
let us propose the following definition:
\begin{definition}\label{def:Linear}
The function~$\Psi$ is said to be asymptotically linear if 
$\lim\limits_{z\to\pm\infty}\Psi'(z)=:\alpha_\pm \in \RR\setminus\{0\}$.
\end{definition}

With this definition, we can replace Assumption~\ref{assu:SVI}(iv) by
\begin{assumption}\label{assu:payoffBis}
$\Psi$ is asymptotically linear and $\lim_{\theta\downarrow 0}\theta\varphi(\theta) = 0$.
\end{assumption}

We now obtain a necessary condition on the behaviour of the function $\varphi$ in~\eqref{eq:GenSVI}.
\begin{proposition} \label{prop:first_lin}
If $\Psi$ is asymptotically linear and if there is no calendar spread arbitrage, 
then the map $u \mapsto u \varphi(u)$ is non-decreasing on~$\RR_+$.
\end{proposition}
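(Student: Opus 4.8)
The plan is to exploit the characterisation of calendar-spread arbitrage freeness already obtained in Proposition~\ref{prop:first_coupling}: condition (ii) states that $1+F(z)f(u)\geq 0$ for all $z\in\RR$ and all $u\in(0,\theta_\infty)$. Since $\theta$ is non-constant and $\lim_{t\downarrow 0}\theta_t=0$, the range of $\theta$ contains an interval $(0,\delta)$ at least, and more importantly the claim ``$u\mapsto u\varphi(u)$ is non-decreasing on $(0,\infty)$'' must be interpreted on the relevant domain; I would first note that it suffices to prove $\partial_u\big(u\varphi(u)\big)\geq 0$, i.e. $\varphi(u)+u\varphi'(u)\geq 0$, i.e. $1+f(u)\geq 0$, for every $u$ in the range of $\theta$ (and by continuity on its closure), which is what no-arbitrage ought to deliver.

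The key step is to extract information from the asymptotic linearity of $\Psi$. By Definition~\ref{def:Linear}, $\Psi'(z)\to\alpha_+$ (finite, non-zero) as $z\uparrow\infty$ and $\Psi'(z)\to\alpha_-$ as $z\downarrow-\infty$; since $\Psi>0$ everywhere and $\Psi(0)=1$, integrating gives $\Psi(z)\sim\alpha_+ z$ as $z\to+\infty$ and $\Psi(z)\sim\alpha_- z$ as $z\to-\infty$. Both $\alpha_+>0$ and $\alpha_-<0$: indeed $\Psi>0$ forces $\Psi$ to be eventually increasing at $+\infty$ (so $\alpha_+\geq 0$, and $\alpha_+\neq 0$ by hypothesis) and eventually decreasing at $-\infty$ (so $\alpha_-\leq 0$, hence $\alpha_-<0$). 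Consequently $F(z)=z\Psi'(z)/\Psi(z)\to 1$ as $z\to+\infty$ and likewise $F(z)\to 1$ as $z\to-\infty$. Thus $\sup_{z\in\RR}F(z)\geq 1$ — in fact the supremum over large positive $z$ is already arbitrarily close to $1$ — and similarly $F$ takes values arbitrarily close to $1$ near $-\infty$.

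Now fix $u$ in the range of $\theta$, so $u\in(0,\theta_\infty)$, and suppose for contradiction that $1+f(u)<0$, i.e. $f(u)<-1$. If $f(u)<-1<0$, choose $z$ large and positive so that $F(z)$ is close enough to $1$ that $F(z)f(u)<-1$; then $1+F(z)f(u)<0$, contradicting Proposition~\ref{prop:first_coupling}(ii). (The case distinction on the sign of $f(u)$ is handled by Remark~\ref{rem:F}: since $\sup F_+\geq 1$, condition (ii) forces $\sup f_-\leq 1$, which is exactly $f(u)\geq -1$ for all $u\in(0,\theta_\infty)$.) Hence $1+f(u)\geq 0$, i.e. $\varphi(u)+u\varphi'(u)\geq 0$, for all $u$ in the range of $\theta$; since $\varphi$ is $\mathcal{C}^1$ and the range of $\theta$ is an interval accumulating at $0$, we conclude $u\mapsto u\varphi(u)$ is non-decreasing.

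The only delicate point is the claim $\alpha_+>0$ and $\alpha_-<0$, i.e. that an asymptotically linear $\Psi$ which stays strictly positive must have slopes of the ``right'' sign at $\pm\infty$; this is where the positivity assumption in Assumption~\ref{assu:SVI}(iii) is essential, ruling out e.g. $\Psi'\to$ a negative constant at $+\infty$ (which would eventually drive $\Psi$ negative). With that observation in hand the rest is a direct limiting argument in Proposition~\ref{prop:first_coupling}(ii), so I do not expect any real obstacle beyond making the domain bookkeeping (range of $\theta$ versus all of $(0,\infty)$) precise.
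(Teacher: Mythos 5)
Your proof is correct and follows essentially the same route as the paper: pass to the limit $z\to+\infty$ in the first coupling condition $1+F(z)f(u)\geq 0$ of Proposition~\ref{prop:first_coupling}, using that asymptotic linearity forces $F(z)\to 1$, to conclude $1+f(u)\geq 0$ and hence $\partial_u(u\varphi(u))\geq 0$. The extra care you take with the signs of $\alpha_\pm$ and with the domain of $u$ is sound but does not change the argument, which the paper states more tersely.
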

\begin{proof}
Using~\eqref{eq:Ff}, if $\Psi$ is asymptotically linear, then 
$\lim_{z\rightarrow \pm\infty} z\Psi'(z)/\Psi(z) = \lim_{z \rightarrow \pm\infty} F(z)=1$, 
so that absence of calendar spread arbitrage implies $1+f(u)\geq 0$ for any $u\in(0, \theta_\infty)$
by Proposition~\ref{prop:first_coupling}(ii).
Since $\varphi$ is a strictly positive function by Assumption~\ref{assu:SVI}(ii), 
the proposition follows from~\eqref{eq:Ff}.
\end{proof}

Note that if $\lim_{z\rightarrow \pm \infty} \Psi'(z) = 0$ then 
the limit of the function~$F$ at (plus or minus) infinity does not necessarily exist.
Whenever it does, since $z \mapsto \Psi(z)/z$ is decreasing as $z \rightarrow \pm\infty$, 
the limit can take any value in $(-\infty,1)$.

\subsection{Application to SVI}\label{sec:CalendarSVI}
In the SVI case~\eqref{eq:SVIW}, we have $\Psi'(z)\equiv\frac{1}{2}\left(\rho+\frac{z+\rho}{\sqrt{z^2+2\rho z+1}}\right)$ with $|\rho| < 1$,
so that $\Psi$ is asymptotically linear with $\alpha_+=\rho+1$ and $\alpha_-=\rho-1$.
Therefore Proposition~\ref{prop:first_lin} applies, 
and a necessary condition is that $u \mapsto u \varphi(u)$ is not decreasing.
In~\cite[Theorem 4.1]{gj}, this condition, together with $\varphi$ being non-increasing, are shown to be sufficient
to avoid calendar spread arbitrage.
In the case of the symmetric SVI model, the following corollary relates our conditions to those in~\cite{gj}.
\begin{corollary}
In the symmetric SVI case, the necessary condition of Proposition~\ref{prop:first_lin} is also sufficient.
\end{corollary}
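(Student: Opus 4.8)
The plan is to make condition~(ii) of Proposition~\ref{prop:first_coupling} fully explicit when $\rho=0$. First I would compute $F$: with $\Psi(z)=\frac12\bigl(1+\sqrt{z^2+1}\bigr)$ one has $\Psi'(z)=\frac{z}{2\sqrt{z^2+1}}$, and the substitution $s:=\sqrt{z^2+1}\ge1$ (so that $z^2=s^2-1$) gives
$$
F(z)=\frac{z\Psi'(z)}{\Psi(z)}=\frac{z^2/\sqrt{z^2+1}}{1+\sqrt{z^2+1}}=\frac{s-1}{s}=1-\frac{1}{\sqrt{z^2+1}}.
$$
Thus $F$ is even, $F(0)=0$, one has $F(z)\in[0,1)$ for all $z\in\RR$, and $\sup_{z\in\RR}F(z)=1$, the supremum being approached as $|z|\to\infty$ but never attained.

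Next I would feed this into Proposition~\ref{prop:first_coupling}: the surface~\eqref{eq:GenSVI} is free of calendar spread arbitrage if and only if $\theta$ is non-decreasing and $1+F(z)f(u)\ge0$ for every $z\in\RR$ and $u\in(0,\theta_\infty)$. Fix $u\in(0,\theta_\infty)$. If $f(u)\ge0$, then $1+F(z)f(u)\ge1>0$ for all $z$ because $F\ge0$; if $f(u)<0$, then $z\mapsto 1+F(z)f(u)$ is a decreasing function of $F(z)$, whose infimum over $\RR$ equals $1+f(u)\sup_{z}F(z)=1+f(u)$. Hence, in either case, $1+F(z)f(u)\ge0$ for all $z\in\RR$ if and only if $1+f(u)\ge0$. (Equivalently, this is Remark~\ref{rem:F} applied with $F_-\equiv0$ and $\sup F_+=1$.) So condition~(ii) amounts to $1+f(u)\ge0$ for all $u\in(0,\theta_\infty)$, and since $\varphi>0$ by Assumption~\ref{assu:SVI}(ii), this reads $\varphi(u)+u\varphi'(u)=\bigl(u\varphi(u)\bigr)'\ge0$, i.e.\ $u\mapsto u\varphi(u)$ is non-decreasing --- which is precisely the condition of Proposition~\ref{prop:first_lin}. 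Together with the ($\Psi$-independent) condition~(i) that $\theta$ be non-decreasing, this yields sufficiency.

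The only point requiring a little care is the non-attainment of $\sup F=1$: one must note that, although no single value of $z$ realises $F(z)=1$, the required inequality still has to hold in the limit $|z|\to\infty$, so the reduction to $1+f(u)\ge0$ is exact and nothing weaker would do. Everything else is elementary; in particular the case $f(u)\ge0$ above explains why, in contrast with the general SVI statement~\cite[Theorem 4.1]{gj}, one need not additionally impose that $\varphi$ be non-increasing in the symmetric case.
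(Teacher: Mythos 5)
Your proof is correct and follows essentially the same route as the paper: compute $F$ explicitly in the symmetric case, observe that $F\geq 0$ with $\sup F=1$ (so $\sup F_+=1$, $\sup F_-=0$ in the notation of Remark~\ref{rem:F}), and reduce condition~(ii) of Proposition~\ref{prop:first_coupling} to $f(u)\geq -1$, i.e.\ to the monotonicity of $u\mapsto u\varphi(u)$. Your closed form $F(z)=1-\left(z^2+1\right)^{-1/2}$ is a slightly cleaner way to see the monotonicity and the unattained supremum than the paper's computation of $F'$, and your explicit remark about condition~(i) on $\theta$ is a careful touch, but the argument is the same.
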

\begin{proof}
In the symmetric case~$\rho=0$, we can compute explicitly
\begin{equation} \label{sym_dev}
\Psi(z) = \frac{1+\sqrt{1+z^2}}{2},\qquad
\Psi'(z) = \frac{1}{2}\frac{z}{\sqrt{1+z^2}}, \qquad
\Psi''(z) = \frac{1}{2(1+z^2)^{3/2}},
\qquad\text{for all }z\in\RR,
\end{equation}
and therefore 
$$
F(z)=\frac{z^2}{\sqrt{1+z^2}\left(1+\sqrt{1+z^2}\right)}
\qquad\text{and}\qquad
F'(z) = \frac{z}{(1+z^2)^{3/2}},
\qquad\text{for all }z\in\RR.
$$
It is then clear that the even function $F$ is strictly increasing on~$\RR_+^*$ 
and strictly decreasing on~$\RR_-^*$
with a global minimum attained at the origin for which $F(0)=0$.
In light of Remark~\ref{rem:F}, we have $\sup{F_+}=1$ and $\sup{F_-}=0$.
By Proposition \ref{prop:first_coupling} there is hence no calendar spread arbitrage if and only if $f(u)\geq-1$, which is equivalent to 
$u \mapsto u \varphi(u)$ being non-decreasing. 
\end{proof}

\section{Elimination of butterfly arbitrage \label{sec_butter}}
We now consider butterfly arbitrage which, probably not surprisingly, is more subtle to handle.
We first start with a general result (Section~\ref{sec:Coupling21}), which is unfortunately not that tractable in practice.
When the function $\Psi$ is asymptotically linear, however, more elegant formulations are available, 
and we provide necessary and sufficient conditions precluding static arbitrage (Section~\ref{sec:Linear}).
In the particular example of the symmetric SVI function (Section~\ref{sec:ApplicationSVI}), 
we put these results in action,
where everything is computable explicitly.
Finally, in Section~\ref{sec:nonsmooth}, we address a delicate issue, allowing for the possibility of non-smooth functions, 
thereby enlarging the class of arbitrage-free volatility surfaces.

\subsection{The second coupling condition}\label{sec:Coupling21}
We consider here the positivity condition $\Ll w (k,t)\geq0$ from Definition~\ref{def:WeakArb}, 
and reformulate the butterfly arbitrage condition in our setting.
We first start with a general formulation, and then consider the asymptotically linear case (for the function $\Psi$), 
which turns out to be more tractable.
For any~$u \in (0,\theta_\infty]$, define the set 
\begin{equation}\label{eq:SetZ+}
\Zz_+(u) := \left\{z \in \mathbb{R}: \frac{1}{4 u} \left(\frac{\Psi'(z)^2}{\Psi(z)}-2\Psi''(z)\right)
+\frac{\Psi'(z)^2}{16} > 0\right\},
\end{equation}
as well as the function
$\Lambda:\left\{(u, z): u\in (0, \theta_\infty], z \in \Zz_+(u)\right\}\to \RR \cup\{+\infty\}$ by
\begin{equation}\label{eq:Lambda}
\Lambda(u, z) := 
\left(\frac{1}{4 u} \left(\frac{\Psi'(z)^{2}}{\Psi(z)}-2\Psi''(z)\right)+\frac{\Psi'(z)^{2}}{16}\right)^{-1}\left(1-\frac{z\Psi'(z)}{2\Psi(z)}\right)^2.
\end{equation}
\begin{proposition}[Second coupling condition, general formulation]
\label{prop:Coupling2}
The surface~$w$ given in~\eqref{eq:GenSVI} is free of butterfly arbitrage if and only if
\begin{equation}\label{eq:Coupling2Iff}
(u \varphi(u))^2 \leq 
\inf_{z \in \Zz_+(u)} \Lambda(u, z),
\quad\text{for all }u \in (0, \theta_\infty).
\end{equation}
\end{proposition}

\begin{proof}
From~\eqref{eq:OpL} and~\eqref{eq:GenSVI}, we clearly have
$\partial_kw(k,t)=\theta_t\Psi'(z)\varphi(\theta_t)$, and 
$\partial^2_{kk}w(k,t)=\theta_t\Psi''(z)\varphi(\theta_t)^2$
for all $k\in\RR$ and $t>0$.
Therefore, with $z:=k\varphi(\theta_t)$, 
\begin{align}
\Ll w (k,t) & = \left(1-\frac{k\partial_kw(k,t)}{2w(k,t)}\right)^2-\frac{(\partial_k w(k,t))^2}{4}\left(\frac{1}{w(k,t)}+\frac{1}{4}\right)+\frac{\partial_{kk}^2w(k,t)}{2}\nonumber\\
 & = \left(1-\frac{k\theta_t\Psi'(z)\varphi(\theta_t)}{2\theta_t\Psi(z)}\right)^2-\frac{(\theta_t\Psi'(z)\varphi(\theta_t))^2}{4}\left(\frac{1}{\theta_t\Psi(z)}+\frac{1}{4}\right)+\frac{\theta_t\Psi''(z)\varphi(\theta_t)^2}{2}\nonumber\\
 & = \left(1-\frac{z\Psi'(z)}{2\Psi(z)}\right)^2-(\theta_t \varphi(\theta_t))^2 
\left\{\frac{1}{4 \theta_t} \left(\frac{(\Psi') ^{2}(z)}{\Psi(z)}-2\Psi''(z)\right)+\frac{(\Psi')^{2}(z)}{16}\right\},\label{ineq:condfly}
\end{align}
and the proposition follows from the definition of~$\Zz_+(u)$.
Indeed, on~$\RR\setminus \Zz_+(u)$, butterfly arbitrage is clearly precluded for any $u>0$, since both terms on the right-hand side of~\eqref{ineq:condfly} are non-negative.
\end{proof}

\subsection{The asymptotically linear case} \label{sec:Linear}
We now consider the case where $\Psi$ is asymptotically linear (Definition~\ref{def:Linear}). 
Define the sets
\begin{equation}\label{eq:SetZBarP}
\overline{\Zz}_+ := \left\{z \in \mathbb{R}: \left(\frac{\Psi' (z)^{2}}{\Psi(z)}-2\Psi''(z)\right) > 0\right\},
\qquad
\overline{\Zz}_-:=\mathbb{R}\setminus \overline{\Zz}_+,
\quad\text{and}\quad
\aleph:= \{z\in\RR: \Psi'(z) = 0\},
\end{equation}
together with the complement in~$\RR$: $\aleph^c := \RR\setminus\aleph$, as well as the,
possibly infinite, quantity
\begin{equation}\label{eq:Minf}
M_\infty:=\lim_{u\uparrow \theta_\infty}u \varphi(u).
\end{equation}

The following proposition, proved in Appendix~\ref{sec:prop:noCSAStrictly}, is a reformulation of Proposition~\ref{prop:Coupling2} 
in the asymptotically linear case, and provides sufficient and necessary conditions for the surface~\eqref{eq:GenSVI} to be free of butterfly arbitrage.
\begin{proposition} \label{prop:noCSAStrictly}
Assume that $\Psi$ is asymptotically linear and there is no calendar spread arbitrage. 
Then~$\overline{\Zz}_+$ is neither empty nor bounded from above.
Moreover, there is no butterfly arbitrage if and only if the following two conditions hold
(recall that the functions~$\Zz_+$ and~$\Lambda$ are defined in~\eqref{eq:SetZ+} and~\eqref{eq:Lambda}):
\begin{enumerate}[(i)]
\item 
\begin{equation*}
\begin{array}{lll}
M_\infty^2 & \displaystyle \leq \inf_{z \in \overline{\Zz}_- \cap \Zz_+(\theta_\infty)\cap\aleph^c}
\Lambda(\theta_\infty, z),
 & \text{if }\theta_\infty<\infty,\\
\\
M_\infty & \displaystyle \leq \inf_{z\in \overline{\Zz}_-\cap\aleph^c}\left|\frac{4}{\Psi'(z)}-\frac{2z}{\Psi(z)}\right|,
& \text{otherwise};
\end{array}
\end{equation*}
\item for any $u \in (0, \theta_\infty)$,
$\displaystyle (u \varphi(u))^2 \leq \inf_{z \in \overline{\Zz}_+} \Lambda(u, z)$.
\end{enumerate}
\end{proposition}
\begin{remark}
Case (ii) actually includes two cases:
$\overline{\Zz}_+\cap\aleph^c$ and $\overline{\Zz}_+\cap\aleph$.
On the former, the function $\Lambda(u,\cdot)$ is well defined and the infimum can be searched for without any confusion.
On $\overline{\Zz}_+\cap\aleph$, however, the function $z\mapsto\Lambda(u,z)$ reduces to $-2u/\Psi''(z)$,
which is always strictly positive.
Note further that, from~\eqref{ineq:condfly}, if $\Psi'(z) = \Psi''(z) = 0$, then positivity of $\Ll w(k,t)$
is automatically guaranteed.
\end{remark}
The following corollary is an immediate consequence of this proposition, in the case $\theta_\infty=\infty$.
\begin{corollary}\label{cor:EasyCondition1}
If $\Psi$ is asymptotically linear and $\theta_\infty=\infty$, then (allowing infinity)
$$
M_\infty \leq \inf_{z\in\mathbb{R}}\left|\frac{4}{\Psi'(z)}-\frac{2z}{\Psi(z)}\right|$$
is a necessary condition for absence of butterfly arbitrage.
In particular $M_\infty \leq 2/\sup\{|\alpha_+|, |\alpha_-|\}$.
\end{corollary}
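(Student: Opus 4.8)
The plan is to start from the pointwise characterisation of butterfly arbitrage, namely $\Ll w(k,t)\geq 0$ for all $k\in\RR$ and $t>0$, and push $t$ to infinity (which is legitimate since $\theta_\infty=\infty$) to obtain an inequality involving only $\Psi$. First I would compute $\Ll w(k,t)$ explicitly in terms of $\Psi$ and its derivatives evaluated at $z:=k\varphi(\theta_t)$. Using $w(k,t)=\theta_t\Psi(z)$, $\partial_k w=\theta_t\varphi(\theta_t)\Psi'(z)$ and $\partial^2_{kk}w=\theta_t\varphi(\theta_t)^2\Psi''(z)$, the definition~\eqref{eq:OpL} becomes
$$
\Ll w(k,t)=\left(1-\frac{z\Psi'(z)}{2\Psi(z)}\right)^2-\frac{\theta_t\varphi(\theta_t)^2\Psi'(z)^2}{4}\left(\frac{1}{\theta_t\Psi(z)}+\frac14\right)+\frac{\theta_t\varphi(\theta_t)^2\Psi''(z)}{2}.
$$
Here I would change variables from $k$ to $z$: as $k$ ranges over $\RR$ so does $z$ for each fixed $t$, hence the condition $\Ll w\geq 0$ is equivalent to the displayed quantity being non-negative for all $z\in\RR$.

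Next I would take the limit $t\uparrow\infty$. Writing $M_t:=\theta_t\varphi(\theta_t)$ and $\mu_t:=\theta_t\varphi(\theta_t)^2 = M_t\varphi(\theta_t)$, note that since $\theta_\infty=\infty$ and $\varphi$ maps into $(0,\infty)$, one expects $\varphi(\theta_t)\to 0$ if $M_\infty$ is finite, so that the term $\mu_t\Psi''(z)/2 = M_t\varphi(\theta_t)\Psi''(z)/2$ vanishes and the term $\mu_t\Psi'(z)^2/(4\theta_t\Psi(z))=M_t^2\Psi'(z)^2/(4\theta_t^2\Psi(z))\to 0$ as well. The surviving terms give, in the limit,
$$
\left(1-\frac{z\Psi'(z)}{2\Psi(z)}\right)^2-\frac{M_\infty^2\,\Psi'(z)^2}{16}\geq 0
\qquad\text{for all }z\in\RR.
$$
Since $1-z\Psi'(z)/(2\Psi(z)) = 1-F(z)/2$ is positive (it equals $\lim_{t} \Ll w$-type expression whose positivity near $z=0$ is clear, and more robustly because the whole square-minus-positive expression being $\geq 0$ forces the square to dominate), I can take square roots to get $|1-z\Psi'(z)/(2\Psi(z))|\geq M_\infty|\Psi'(z)|/4$, i.e. (dividing by $|\Psi'(z)|/4$, which is licit wherever $\Psi'(z)\neq 0$)
$$
M_\infty\leq \left|\frac{4}{\Psi'(z)}-\frac{2z}{\Psi(z)}\right|.
$$
Taking the infimum over $z\in\RR$ yields the claimed bound. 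The final assertion $M_\infty\leq 2/\sup\{|\alpha_+|,|\alpha_-|\}$ follows by letting $z\to\pm\infty$: asymptotic linearity gives $\Psi'(z)\to\alpha_\pm$ and $\Psi(z)/z\to\alpha_\pm$, so $4/\Psi'(z)-2z/\Psi(z)\to 4/\alpha_\pm - 2/\alpha_\pm = 2/\alpha_\pm$, hence $M_\infty\leq |2/\alpha_\pm|$ for both signs.

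The main obstacle is making the limiting argument rigorous: I must justify (a) that the hypotheses force $\varphi(\theta_t)\to 0$ — this should come from Proposition~\ref{prop:first_lin}/the first coupling condition together with $\theta_\infty=\infty$ and finiteness of $M_\infty$ (if $M_\infty=\infty$ the stated inequality is vacuous after one checks the right-hand side is finite, which it is since $\Psi'$ is asymptotically non-zero, so actually the interesting content forces $M_\infty<\infty$ and hence the monotone limit $M_\infty=\lim u\varphi(u)$ exists by Proposition~\ref{prop:first_lin}); and (b) that $t\mapsto\theta_t$ actually reaches values approaching $\theta_\infty=\infty$, which is immediate. A secondary subtlety is the sign of $1-F(z)/2$ needed to take the square root cleanly; I would handle this by noting the limiting inequality already forces $(1-F(z)/2)^2\geq M_\infty^2\Psi'(z)^2/16\geq 0$ and that $1-F(z)/2>0$ near $z=0$, then either invoke continuity and a connectedness argument or simply observe that the absolute value in the statement absorbs any sign issue. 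Everything else is routine substitution and passage to the limit.
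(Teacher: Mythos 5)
Your proposal is correct and follows essentially the same route as the paper: the identity $\Ll w(k,t)=\left(1-\frac{z\Psi'(z)}{2\Psi(z)}\right)^2-(\theta_t\varphi(\theta_t))^2\left\{\frac{1}{4\theta_t}\left(\frac{\Psi'(z)^2}{\Psi(z)}-2\Psi''(z)\right)+\frac{\Psi'(z)^2}{16}\right\}$ is exactly the paper's second coupling computation, and letting $u=\theta_t\uparrow\theta_\infty=\infty$ so that the $1/(4u)$ term vanishes is precisely how Proposition~\ref{prop:noCSAStrictly} (of which this is stated as a corollary) handles the case $\theta_\infty=\infty$. Your direct pointwise passage to the limit merely bypasses the intermediate set machinery ($Z_+(u)$, $\overline{Z}_\pm$), and your worries about the sign of $1-F(z)/2$ are indeed absorbed by the absolute value, as you observe.
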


A little work on the proposition above yields the following sufficient condition preventing butterfly arbitrage, which is easier to check in practice.
\begin{corollary}
Assume that~$\Psi$ is asymptotically linear, that there is no calendar spread arbitrage 
and that $\aleph = \emptyset$.
Assume further that for any $u\in (0,\theta_\infty)$, the inequality in Proposition~\ref{prop:noCSAStrictly}(ii) is strict.
Then the corresponding implied volatility surface is free of static arbitrage.
\end{corollary}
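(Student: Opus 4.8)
The plan is to apply the sufficient conditions for absence of static arbitrage recalled at the beginning of Section~\ref{sec:Arbitrage} (due to Roper): since the time-zero smile vanishes under Assumption~\ref{assu:SVI}(iv) — equivalently Assumption~\ref{assu:payoffBis} in the asymptotically linear regime considered here — and since calendar spread arbitrage is excluded by hypothesis, it suffices to show that $w$ is free of butterfly arbitrage. By Proposition~\ref{prop:noCSAStrictly} this amounts to verifying its two conditions (i) and (ii). Condition (ii) is immediate from the hypothesis (it is assumed to hold with a strict inequality for every $u\in(0,\theta_\infty)$, a fortiori with $\le$), so the whole content of the proof is to deduce condition (i) from the strict form of (ii).

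The mechanism I would use to pass from the bound on $\overline{Z}_+$ in (ii) to the bound on $\overline{Z}_-$ in (i) is a limiting argument in the maturity variable. First I would record that $u\mapsto u\varphi(u)$ is non-decreasing (Proposition~\ref{prop:first_lin}), so that $(u\varphi(u))^2\le M_\infty^2$ for all $u<\theta_\infty$. Next, setting
$$
G(u,z):=\frac{\left(1-\frac{z\Psi'(z)}{2\Psi(z)}\right)^2}{\frac{1}{4u}\left(\frac{\Psi'(z)^2}{\Psi(z)}-2\Psi''(z)\right)+\frac{\Psi'(z)^2}{16}}
\qquad\text{for }z\in Z_+(u),
$$
one checks the elementary monotonicity properties: $u\mapsto G(u,z)$ is non-decreasing on $\overline{Z}_+$, non-increasing on $\overline{Z}_-$, and does not depend on $u$ on the common boundary $\partial\overline{Z}_+=\partial\overline{Z}_-$, where it equals $(4/\Psi'(z)-2z/\Psi(z))^2$. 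Hence $\overline{G}(z):=\lim_{u\uparrow\theta_\infty}G(u,z)$ exists, equal to $G(\theta_\infty,z)$ if $\theta_\infty<\infty$ and to $(4/\Psi'(z)-2z/\Psi(z))^2$ if $\theta_\infty=\infty$; and, up to squaring, $\overline{G}$ is exactly the function appearing on the right-hand side of condition (i). Letting $u\uparrow\theta_\infty$ in the strict inequality (ii), for each fixed $z\in\overline{Z}_+$ one obtains $M_\infty^2\le\overline{G}(z)$, whence $M_\infty^2\le\inf_{z\in\overline{Z}_+}\overline{G}(z)$.

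It then remains to promote this to $M_\infty^2\le\inf_{z\in\overline{Z}_-}\overline{G}(z)$ (with the restriction to $Z_+(\theta_\infty)$ when $\theta_\infty<\infty$), which is condition (i). For $z$ on the boundary of $\overline{Z}_-$ this is automatic, by continuity of $\overline{G}$ together with the fact that $\overline{Z}_+$ accumulates there. For $z$ in the interior of $\overline{Z}_-$ — where $\Psi'(z)^2/\Psi(z)-2\Psi''(z)<0$, i.e.\ $z\mapsto\sqrt{\Psi(z)}$ is strictly convex — one must argue that $\overline{G}$ cannot be too small there; and this is the only genuinely delicate point. The heuristic is that a very small value of $\overline{G}$ at such a point would force $z\Psi'(z)/\Psi(z)$ near $2$, hence $\sup F$ large, and then the coupling condition of Remark~\ref{rem:F} (together with Proposition~\ref{prop:first_lin}) would force $u\varphi(u)$ to grow so fast that the strict bound (ii) must fail as $u\uparrow\theta_\infty$; making this comparison of the two infima precise, jointly exploiting the convexity of $\sqrt{\Psi}$ on $\overline{Z}_-$ and the no-calendar-spread constraints, is the main obstacle. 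Once it is in place, Proposition~\ref{prop:noCSAStrictly} yields the absence of butterfly arbitrage, and Roper's criterion gives the absence of static arbitrage.
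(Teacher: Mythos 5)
Your proof targets the wrong missing ingredient, and the step you yourself flag as the main obstacle is left unproven. The substance of this corollary is the large\hyphenation{moneyness}-moneyness condition $\lim_{k\uparrow\infty}d_+(k,w(k,t))=-\infty$ appearing in Definition~\ref{def:Butterfly}: Propositions~\ref{prop:Coupling2} and~\ref{prop:noCSAStrictly} only control the sign of $\Ll w$, while Roper's criterion additionally requires call prices to vanish at large strikes. The strict inequality in (ii) is assumed precisely to deliver this, and the paper's proof consists of nothing else: on $\overline{Z}_+$ the term $\frac{1}{4u}\bigl(\frac{\Psi'(z)^2}{\Psi(z)}-2\Psi''(z)\bigr)$ is positive, so the right-hand side of (ii) is bounded above by $16\bigl(1-\frac{z\Psi'(z)}{2\Psi(z)}\bigr)^2/\Psi'(z)^2$; evaluating along a sequence $z_n\uparrow\infty$ in $\overline{Z}_+$ (which is unbounded above by Proposition~\ref{prop:noCSAStrictly}) and using $\Psi(z)\sim\alpha_+z$, $\Psi'(z)\to\alpha_+$, the strict inequality against the infimum gives $(\theta_t\varphi(\theta_t))^2<4/\alpha_+^2$, hence $\lim_{k\uparrow\infty}w(k,t)/k=\theta_t\varphi(\theta_t)\alpha_+<2$ and the LMB condition follows. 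Your proposal never verifies this; it implicitly leans on the statement of Proposition~\ref{prop:noCSAStrictly} to cover the LMB part of Definition~\ref{def:Butterfly}, which that proposition's proof does not establish --- supplying it is exactly the job of this corollary.

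Concerning your attempt to deduce condition (i) from the strict form of (ii): your monotonicity observations on $u\mapsto G(u,z)$ and the passage to $M_\infty^2\le\inf_{\overline{Z}_+}\overline{G}$ are correct, but the promotion to the infimum over $\overline{Z}_-$ is only a heuristic, and it cannot be completed in general. Conditions (i) and (ii) bound $M_\infty^2$ and $(u\varphi(u))^2$ by infima of the same expression over the \emph{disjoint} sets $\overline{Z}_-$ and $\overline{Z}_+$, and nothing in (ii) constrains $\Psi$ on $\overline{Z}_-$: in the symmetric SVI case the relevant infimum over $\overline{Z}_-\cap Z_+(u)$ equals $48$ and so happens to exceed the $\overline{Z}_+$ infimum ($A^*(u)$ or $16$), but that is a feature of that particular $\Psi$, not a consequence of (ii). The corollary has to be read as assuming that $\Ll w\ge0$ already holds, i.e.\ both conditions of Proposition~\ref{prop:noCSAStrictly}, with the strictness of (ii) as the additional hypothesis; the paper's proof takes (i) for granted and devotes itself entirely to the LMB condition.
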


\begin{proof}
In our setting ($\Psi$ asymptotically linear),
$\lim_{k \uparrow \infty} \frac{w(k,t)}{k} =  \theta_t \varphi(\theta_t) \alpha_+$,
so that we only need to prove that $\theta_t \varphi(\theta_t) < \frac{2}{\alpha_+}$, 
since $\lim_{k\uparrow \infty}\frac{w(k,t)}{k}<2$ clearly implies the LMB condition.
For any $z \in \overline{\Zz}_+$ (defined in~\eqref{eq:SetZBarP}), note that
$$
\Lambda(\theta_t, z) = \frac{\left(1-\frac{z\Psi'(z)}{2\Psi(z)}\right)^2}
{\frac{1}{4 \theta_t} \left(\frac{\Psi'(z)^{2}}{\Psi(z)}-2\Psi''(z)\right)+\frac{\Psi'(z)^{2}}{16}}
\leq \frac{\left(1-\frac{z\Psi'(z)}{2\Psi(z)}\right)^2}{\frac{\Psi'(z)^{2}}{16}}.
$$
Applying this to a sequence in~$\overline{\Zz}_+$ diverging to infinity yields
$(\theta_t \varphi(\theta_t) )^2 < \frac{4}{\alpha_+^2}$
and the result follows.
\end{proof}

\subsection{Application to symmetric SVI}\label{sec:ApplicationSVI}
As in Section~\ref{sec:CalendarSVI} above, we show that in the symmetric SVI case ($\rho=0$), 
all our expressions above are easily computed and give rise to simple formulations.
It is clear that the set~$\aleph$ defined in~\eqref{eq:SetZBarP} is empty in this case.
Let us define the functions $A$, $Y$ and $A^*$ by
\begin{equation}\label{eq:AYStar}
\begin{array}{rl}
A(y, u) & := \displaystyle \frac{16 uy(y+1)}{8(y-2) + uy(y-1)},
\qquad
A^*(u) := A(Y(u), u),\\
Y(u) & := \displaystyle \frac{2}{1-u/4} +\sqrt{\left (\frac{2}{1-u/4}\right)^2 + \frac{2}{1-u/4}}.
\end{array}
\end{equation}
Of course we only define these functions on their effective domains, the forms of which we omit for clarity.
The following proposition makes the conditions of Proposition~\ref{prop:noCSAStrictly} explicit in the symmetric SVI case.
\begin{proposition}
In the symmetric SVI~\eqref{eq:SVIW} case $\rho=0$, there is no butterfly arbitrage if and only if
$$
\left(u \varphi(u)\right)^2 \leq A^*(u) \ind_{\{u<4\}}+ 16\, \ind_{\{u\geq4\}},
\qquad\text{for all }u \in (0, \theta_\infty).$$
\end{proposition}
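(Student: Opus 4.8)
The plan is to specialise Proposition~\ref{prop:noCSAStrictly} to the symmetric SVI function, whose relevant quantities are already recorded in~\eqref{sym_dev}. First I would compute the quantity appearing under the supremum/infimum signs. Using $\Psi(z)=\tfrac12(1+\sqrt{1+z^2})$, $\Psi'(z)=\tfrac12 z/\sqrt{1+z^2}$ and $\Psi''(z)=2(1+z^2)^{-3/2}$ (note: the normalisation of $\Psi''$ in~\eqref{sym_dev} should read $\tfrac12(1+z^2)^{-3/2}$, and I would use whichever is consistent), a direct calculation gives
\begin{equation*}
\frac{\Psi'(z)^2}{\Psi(z)}-2\Psi''(z)
=\frac{z^2/(1+z^2)}{2(1+\sqrt{1+z^2})}-\frac{1}{(1+z^2)^{3/2}},
\end{equation*}
and I would simplify this to a single rational expression in $\sqrt{1+z^2}$. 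The key observation is that, after the substitution $y:=\sqrt{1+z^2}\geq 1$ (equivalently $z^2=y^2-1$), both the numerator $(1-z\Psi'(z)/(2\Psi(z)))^2$ and the denominator of the ratio in Proposition~\ref{prop:noCSAStrictly} become explicit rational functions of $y$ and $u$; the ratio then reads exactly $A(y,u)$ as defined in~\eqref{eq:A} (up to the algebra I would carry out). The even symmetry of $F$ (shown in the corollary of Section~\ref{sec:CalendarSVI}) means nothing is lost by restricting to $z\geq 0$, hence to $y\geq 1$, and the map $z\mapsto y$ is a bijection from $[0,\infty)$ to $[1,\infty)$.

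Next I would identify the set $\overline{Z}_+$ in these coordinates. By~\eqref{eq:SetZBarP}, $z\in\overline{Z}_+$ iff $\Psi'(z)^2/\Psi(z)-2\Psi''(z)>0$, which translates into an inequality on $y$; one checks this holds iff $y$ exceeds a threshold, and this threshold is precisely where the denominator $8(y-2)+uy(y-1)$ of $A(y,u)$ changes sign (so that on $\overline{Z}_+$ we have $A(y,u)>0$, consistent with it bounding a square). Then Proposition~\ref{prop:noCSAStrictly}(ii) says $(u\varphi(u))^2\leq \inf_{y} A(y,u)$ over the admissible range of $y$. I would then minimise $y\mapsto A(y,u)$ over that range: computing $\partial_y A(y,u)$, the critical-point equation is a quadratic in $y$ whose positive root is exactly $Y(u)$ in~\eqref{eq:Y} — here the $2/(1-u/4)$ structure should drop out naturally from solving that quadratic. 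One must check $Y(u)$ lies in the admissible range and that it is a minimum (second-order condition or sign analysis of $\partial_y A$), giving $\inf_y A(y,u)=A(Y(u),u)=A^*(u)$, hence $u\varphi(u)\leq \sqrt{A^*(u)}$; I would absorb the square root into the definition of $A^*$ if the statement intends $A^*$ to already be that square (checking the stated inequality $u\varphi(u)\le A^*(u)$, not its square, suggests $A^*$ should be defined with a square root, or the algebra produces a perfect square — I would reconcile this).

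The case split at $u=4$ comes from condition~(i) of Proposition~\ref{prop:noCSAStrictly} together with the behaviour of $Y(u)$: when $u\uparrow 4$ the factor $1/(1-u/4)$ blows up, $Y(u)\to\infty$, and for $u\geq 4$ the denominator $8(y-2)+uy(y-1)$ is eventually (or always, for $y\geq 1$) positive so $\overline{Z}_+$ is all of $\RR$ (in $z$), while the minimum of $A(y,u)$ over $y\geq 1$ is then attained at the boundary $y=1$ (i.e. $z=0$), giving $A(1,u)=16$ — this produces the constant bound $16$ on $\{u\geq 4\}$. I would also verify that condition~(i), which constrains the tail $M_\infty=\lim u\varphi(u)$ via $\overline{Z}_-$, is implied by (ii) here because in the symmetric case $\overline{Z}_-$ is bounded and $|4/\Psi'(z)-2z/\Psi(z)|$ on it dominates the relevant quantity; since $\theta_\infty=\infty$ is not assumed, I would treat both branches of~(i) but show they add nothing beyond the stated inequality. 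The main obstacle I anticipate is purely computational stamina: carrying the $y$-substitution through cleanly so that $A$, $Y$ and the threshold defining $\overline{Z}_+$ all emerge in the compact forms~\eqref{eq:A}--\eqref{eq:AStar} without sign errors, and in particular verifying that $Y(u)$ is the global minimiser on the correct domain (rather than a spurious root or a maximum), including the delicate matching at $u=4$ where the domain of $A$ and the location of the minimiser both degenerate.
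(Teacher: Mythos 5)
Your overall route is exactly the paper's: substitute $y=\sqrt{1+z^2}$, reduce the ratio in Proposition~\ref{prop:noCSAStrictly}(ii) to $A(y,u)$, and locate the minimiser via the sign of $\partial_y A$, whose numerator is governed by the quadratic $B_u(y)=(1-u/4)y^2-4y-2$ with positive root $Y(u)$. Your side observations are also correct and worth keeping: \eqref{sym_dev} does misprint $\Psi''$ (it should be $\tfrac12(1+z^2)^{-3/2}$; the subsequent identities in the paper nonetheless use the correct value, e.g. $(\Psi')^2-2\Psi\Psi''=\tfrac14(y-2)(y+1)^2/y^3$); the displayed condition should read $(u\varphi(u))^2\leq\cdots$ to be consistent with Proposition~\ref{prop:noCSAStrictly}(ii) (compare Remark~\ref{rem:Comparison}, where $A^*$ is matched against $(u\varphi(u))^2<16$); and condition~(i) is indeed redundant here, since on $\overline{Z}_-\cap Z_+(u)$ (i.e.\ $y\in(1,2]$) the ratio is bounded below by $16(y+1)/(y-1)\geq 48$, which exceeds the infimum over $\overline{Z}_+$ --- a point the paper leaves implicit.

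The genuine problem is your treatment of the branch $u\geq4$. First, $\overline{Z}_+$ does not depend on $u$: it is always $\RR\setminus[-\sqrt3,\sqrt3]$, i.e.\ $y>2$, because $(\Psi')^2/\Psi-2\Psi''$ changes sign at $y=2$; the $u$-dependent set is $Z_+(u)$, whose boundary (the zero of $8(y-2)+uy(y-1)$) lies strictly below $y=2$, so your identification of the $\overline{Z}_+$ threshold with the sign change of the denominator of $A$ is also off. Second, $A(1,u)=32u/(-8)=-4u$, not $16$, and $y=1$ (i.e.\ $z=0$) is not in the admissible domain; the denominator of $A$ equals $-8$ there, contradicting your claim that it is positive for all $y\geq1$ when $u\geq4$. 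The correct argument for $u\geq4$ is that $B_u$ is concave with $B_u(2)=-(6+u)<0$, hence $B_u<0$ and $y\mapsto A(y,u)$ is decreasing on all of $(2,\infty)$, so the infimum is the limit $\lim_{y\uparrow\infty}A(y,u)=16$, attained only asymptotically. You land on the right constant $16$, but by an incorrect mechanism that would not have survived actually evaluating $A(1,u)$.
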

\begin{proof}
Define $y_z:= \sqrt{1+z^2}$; then
$$
\left(\Psi'(z)\right)^2-2 \Psi(z) \Psi''(z) = \frac{1}{4} \frac{(y_z-2)(y_z+1)^2}{y_z^3},
\qquad
1-\frac{z \Psi'(z)}{2 \Psi(z)} = \frac{1}{2}\left(1+\frac{1}{y_z}\right),\\
\qquad
\Psi'(z)^2 = \frac{1}{4}\left(1 - \frac{1}{y_z^2}\right).
$$
Since $\Psi(z)>0$ for all $z\in\mathbb{R}$, the first equation implies that~$\overline{\Zz}_+$ defined in~\eqref{eq:SetZBarP} is equal to~$\RR\setminus [-\sqrt{3},\sqrt{3}]$. 
For any fixed $u$, the function appearing on the right-hand side of Proposition~\ref{prop:noCSAStrictly}(ii) 
simplifies to~$A(y, u)$ given in~\eqref{eq:AYStar}.
In particular
$A(2,u)=48$ and $\lim_{y\uparrow\infty}A(y,u)=16$.
For any $u\geq 0$, we have
$$
\partial_y A(y,u) = \frac{128 u B_u(y)}{\left(8y-16+y^2 u-yu\right)^2},
$$
where $B_u(y):=\left(1-\frac{u}{4}\right) y^2 -4 y-2$.
When $u\geq 4$, $B_u$ is concave on $(2,\infty)$ with $B_u(2)=-(6+u)<0$, and hence
the map $y\mapsto A(y,u)$ is decreasing on $(2,\infty)$ and its infimum is equal to
$\lim_{y\uparrow\infty}A(y,u)=16$.
For $u\in[0,4)$, the strict convexity of $B_u$ and the inequality $B_u(2) = -(6+u)<0$ implies that 
the equation $B_u(y)=0$ has a unique solution in~$(2,\infty)$, which in fact is equal to $Y(u)$ given in~\eqref{eq:AYStar}.
Then the map $y\mapsto A(y,u)$ is decreasing on $(2,Y(u))$ and increasing on $(Y(u),\infty)$.
Its infimum is attained at~$Y(u)$ and is equal to $A^*(u)$ defined in~\eqref{eq:AYStar}.
\end{proof}

\begin{remark}\label{rem:Comparison}
In~\cite{gj}, the authors prove that the two conditions (altogether) $u\varphi(u)<4$ and 
$u\varphi(u)^2<4$ (for all~$u\geq 0$) are sufficient to prevent butterfly arbitrage in the uncorrelated ($\rho=0$) case.
These two conditions can be combined to obtain
$(u\varphi(u))^2<16\min(1,\varphi(u)^{-2})$.
A tedious yet straightforward computation shows that $A^*$ is increasing on $[0,4)$
and maps this interval to $[0,16)$.
Notwithstanding the fact that our condition is necessary and sufficient, it is then clear
that 
\begin{enumerate}[(i)]
\item for $u\geq 4$, it is also weaker than the one in~\cite{gj} whenever $\varphi(u)<1$;
\item for $u<4$ (which accounts for most practically relevant cases) it is weaker
whenever~$16/\varphi(u)<A^*(u)$.
\end{enumerate}
In particular, item~(ii) could be used as a sufficient and necessary lower bound condition (depending on~$u$)
for the function $\varphi$ on $[0,4)$.
\end{remark}

\subsection{Non-smooth implied volatilities}\label{sec:nonsmooth}
The formulation of arbitrage freeness in~\cite[Theorem 2.1]{roper} 
is minimal in the sense that the regularity conditions on the Call option prices are necessary and sufficient: 
to be convex in the strike direction and non-decreasing in the maturity direction.
The implied volatility formulation (\cite[Theorem 2.9, condition IV.1]{roper} and Theorem~\ref{thm:RoperNoArb}(i) above)
however, assumes that the total variance is twice differentiable in the strike direction. 
This regularity is certainly not required; 
in fact,  the author~\cite[Theorem 2.9]{roper} proves the latter by checking the necessary assumptions 
on the behaviour of the Call price (\cite[Theorem 2.1]{roper}) defined by 
$\BS(\E^k, w(k,t))$,
with $\BS$ defined in~\eqref{eq:BlackScholes}.
More precisely, Roper uses the regularity assumption in~$k$ of~$w$ in order to define pointwise the second derivative of this Call price function with respect to the strike.
He then proves that the latter is positive, henceforth obtaining the convexity of the price with respect to the strike~\cite[Theorem 2.1, Assumption A.1]{roper}.
It turns out that the same result can be obtained without this regularity assumption.
Let $\widetilde{L}^\infty_{+}(\RR\to\RR_+^*)$ denote the 
space of strictly positive, continuous, functions on the real line, 
differentiable except possibly at finitely many points, 
and with derivatives in~$L^\infty_{\loc}(\RR\to\RR)$, 
the space of locally essentially bounded measurable functions.
Introduce then the functional~$\Mm$ on~$\widetilde{L}^\infty_{+}(\RR\to\RR_+^*)$ by
\begin{equation}\label{eq:M}
\Mm_v(k) := \left(1-\frac{kv'(k)}{2v(k)}\right)^2 - \frac{v'(k)^2}{4}\left(\frac{1}{v(k)}+\frac{1}{4}\right),
\qquad\text{for all }k\in\RR.
\end{equation}

\begin{proposition}
For any~$v \in \widetilde{L}^\infty_{+}(\RR\to\RR_+^*)$, the following hold:
\begin{enumerate}
\item the functional~$\Mm_v$ in~\eqref{eq:M}
is well defined in $\widetilde{L}^\infty_{+}(\RR\to\RR_+^*)$, 
hence in the sense of distributions;
\item let~$v''$ denote the second derivative of~$v$ in the sense of distributions. 
Then the map $K \mapsto \BS(K, v(\log(K))$ is convex if and only if 
$\Ll v := \displaystyle \Mm_v + \frac{1}{2}v''$ is a positive distribution.
\end{enumerate}
\end{proposition}

We abuse the notation slightly by considering the same symbol for the operator~$\Ll$ 
here and in~\eqref{eq:OpL}, although they do not act on the same spaces;
this should however not create any confusion.

\begin{proof}
The first statement follows from the fact that $v$ is positive continuous and 
$v' \in L^\infty_{\loc}(\RR\to\RR)$.
Consider now a strictly positive smooth function~$\zeta$ with compact support, which integrates to one,
and regularise~$v$ by convolution as
$v_\eps(k) \equiv \eps^{-1}\zeta(k/\eps) \ast v(k)$.
Then~$v_\eps$ is a smooth strictly positive function, and Roper's computation~\cite[Theorem 2.9]{roper}
applies:
$$
\frac{\D^2 \BS(K, v_\eps(\log(K))}{\D K^2}
 = \frac{2\partial_w \BS(K, v_\eps(\log(K))}{K^2} \Ll v_\eps(\log(K)),
$$ 
where~$\Ll v_\eps$ is defined pointwise, 
and where $\partial_w \BS$ denotes the derivative of the function~$\BS$ 
with respect to its second component.
It follows that for any~$\phi \in \Cc^\infty(\RR_+)$ with compact support on~$\RR_+$,
\begin{align}\label{eq:IntegrationByParts}
\int_{\RR_+} \phi''(K) \BS(K, v_\eps(\log(K)) \D K
 & = \int_{\RR_+} \phi(K) \frac{\D^2 \BS(K, v_\eps(\log(K))}{\D K^2} \D K\\
 & =
2\int_{\RR_+}  \phi(K)\frac{\partial_w \BS(K, v_\eps(\log(K))}{K^2} \Ll v_\eps(\log(K))\D K,\nonumber
\end{align}
where the boundary terms cancel since~$\phi$ has compact support.
Mapping~$K\mapsto \E^{k}$, the last integral reads
$$
\int_{\RR} \phi(\E^k) \E^{-k} \partial_{w}\BS\left(\E^k, v_\eps(k)\right) \Ll v_\eps(k) \D k.
$$
When~$\eps$ tends to zero, 
$v_\eps$ converges pointwise to~$v$, 
$v'_\eps$ to~$v'$ almost everywhere, 
and~$v''_\eps$ to~$v''$ in the sense of distribution.
It follows that the map $\partial_{w}\BS\left(\E^\cdot, v_\eps(\cdot)\right) \Ll v_\eps(\cdot)$ 
converges to $\partial_{w}\BS\left(\E^\cdot, v(\cdot)\right) \Ll v(\cdot)$ 
in the sense of distribution (on~$\RR$). 
Now the first line of~\eqref{eq:IntegrationByParts} converges to 
$$
\int_{\RR_+} \phi''(K) \BS(K, v(\log(K))\D K = \langle\phi, \Pp\rangle_{\RR_{+}},
$$
where~$\Pp$ is the second derivative of $K \mapsto \BS(K, v(\log(K))$ in the sense of distribution,
and $\langle\cdot, \cdot\rangle_{\RR_+}$ the duality bracket.
Therefore,
$\langle \phi, \Pp\rangle_{\RR_+}
 = 2\langle \phi(\E^\cdot)\E^{-\cdot}, \partial_{w} \BS(\E^\cdot, v(\cdot)) \Ll v\rangle_{\RR}$,
so that~$\Pp$ is a positive distribution on~$\RR_{+}$ 
if and only if $\partial_{w}\BS(\E^\cdot, v(\cdot)) \Ll v$ is a positive distribution on~$\RR$.
Finally, the function $K \mapsto \BS(K, v(\log(K))$ is convex if and only if~$\Pp$ is a positive distribution;
since $\partial_{w}\BS(\E^{\cdot}, v(\cdot))$ is positive continuous, 
$\partial_{w}\BS(\E^\cdot, v(\cdot)) \Ll v$ is a positive distribution if and only if~$\Ll v$ is, 
which concludes the proof.

\end{proof}
Let us finally note that our assumptions on~$w$ are indeed minimal: 
conversely, if we start from an option price convex in~$K$,
its first derivative is defined almost everywhere, and so is that of~$w$ (in~$K$ or~$k$) 
since the Black-Scholes mapping in total variance is smooth.
Assumption~\ref{assu:SVI} imposes some (mild yet sometimes unrealistic) conditions on the volatility surface.
It turns out that our results are still valid under weaker conditions on the function~$\Psi$.
Recall first the following definition:
\begin{definition}\label{def}
A continuous function~$f$ is said to be of class $\mathcal{D}(\RR\to\RR)$ 
if there exist $a_0<a_1<\cdots<a_N$ (for some $N\in\mathbb{N}$),
such that $f \in \mathcal{C}^2(\RR \setminus \{a_0,\ldots,a_N\}\to\RR)$,
and such that the right and left limits
$\lim\limits_{a\downarrow a_i}f'(a)$ and~$\lim\limits_{a\uparrow a_i}f'(a)$ exist for each $i\in\{0,\ldots,N\}$.
\end{definition}
Consider now the following alternative to Assumption~\ref{assu:SVI}:
\begin{assumption}\label{assu:SVIweak}
Assumption~\ref{assu:SVI}(i), (ii) and (iv) are unchanged, but~(iii) is replaced by the weaker version:
$\Psi \in \mathcal{D}(\RR \to \RR_+^*)$, with $\Psi(0)=1$, $\Psi$ not constant.
\end{assumption}

Let $\mathcal{A}_{\Psi}$ denote the (possibly empty) set of discontinuity of $\Psi'$. 
Under our assumption, $\Psi''$ in the distribution sense is defined as a sum of a continuous measure 
on~$\RR\setminus \mathcal{A}_{\Psi}$ and of Dirac masses $\alpha_i \delta_i$ at each point of discontinuity 
$a_i \in \mathcal{A}_{\Psi}$.
We extend the results of Section~\ref{sec_butter} in the following way:
recall the sets $\Zz_+(\cdot)$ and $\overline{\Zz}_\pm$ in~\eqref{eq:SetZBarP} and~\eqref{eq:SetZ+}, 
and define
\begin{equation}\label{eq:SetZweak+}
\widetilde{\Zz}_+(u) := \Zz_+(u)\setminus\mathcal{A}_{\Psi}
\qquad\text{and}\qquad
\widetilde{\overline{\Zz}}_+:=\overline{\Zz}_+\setminus\mathcal{A}_{\Psi}.
\end{equation}
\begin{proposition}[Second coupling condition, general formulation]\label{prop:weak}
The surface~\eqref{eq:GenSVI} is free of butterfly arbitrage if and only if
the jumps of $\Psi'$ are non-negative and~\eqref{eq:Coupling2Iff} holds with~$\widetilde{\Zz}_+$ instead of~$\Zz_+$.
\end{proposition}

\begin{proof}
Similarly to the proof of~\ref{prop:Coupling2}, the continuous part of~$\Ll w$ has a density given by
$$
\left(1-\frac{z\Psi'(z)}{2\Psi(z)}\right)^2-(\theta_t \varphi(\theta_t))^2 
\left\{\frac{1}{4 \theta_t} \left(\frac{(\Psi') ^{2}(z)}{\Psi(z)}-2\Psi''(z)\right)+\frac{(\Psi')^{2}(z)}{16}\right\},
$$
for any $t>0$ and $k\in \RR\setminus\mathcal{A}_{\Psi}$,
and the first part of the proposition follows.
The remaining part of the distribution $\Ll w (k,t) $ is the sum of the disjoint Dirac masses 
$(\theta_t \varphi(\theta_t))^2 \alpha_i \delta_i $.
By localisation it is clear that the distribution $\Ll w (k,t) $ is positive if and only if its continuous part on 
$\RR \setminus \mathcal{A}_{\Psi}$ is positive and each
of its point mass distribution is positive. 
Since~$\alpha_i$ is non-negative if and only if~$\Psi'$ has a non-negative jump at~$a_i$, 
the rest of the proposition follows.
\end{proof}

Likewise, the analogue of Proposition~\ref{prop:noCSAStrictly} holds as follows:

\begin{proposition} \label{prop:noCSAStrictlyweak}
If $\Psi$ is asymptotically linear and if there is no calendar spread arbitrage, 
then~$\widetilde{\overline{\Zz}}_+$ is neither empty nor bounded from above.
Moreover, there is no butterfly arbitrage if and only if 
the jumps of $\Psi'$ are non-negative and 
Proposition~\ref{prop:noCSAStrictly}(i)-(ii) hold with $\widetilde{\Zz}_+(\cdot)$ and 
$\widetilde{\overline{\Zz}}_\pm$ instead of 
$\Zz_+(\cdot)$ and $\overline{\Zz}_\pm$.
\end{proposition}


\section{The quest for a non-SVI $\Psi$ function\label{sec_example}}
In order to find examples of pairs $(\varphi, \Psi)$, with $\Psi$ different from the SVI parameterisation~\eqref{eq:SVIW}, 
observe first that the second coupling condition (Proposition~\ref{prop:Coupling2}) is more geared 
towards finding out $\varphi$ given $\Psi$ than the other way round. 
We first start with a partial result (proved in Appendix~\ref{sec:prop:EffBoundPsi}) in the other direction, 
assuming that $\Psi$ is asymptotically linear.

\begin{proposition}\label{prop:EffBoundPsi}
If the generalised SVI surface~\eqref{eq:GenSVI} is free of static arbitrage, 
$\Psi$ is asymptotically linear and $\theta_\infty = \infty$, then there exist $z_+\geq 0$ and $\kappa\geq 0$ 
such that for all $z\geq z_+$ the following upper bound holds (with~$M_\infty$ defined in~\eqref{eq:Minf}):
$$
\Psi(z) \leq \kappa^2+\frac{2z}{M_\infty} - \kappa \sqrt{\kappa^2+\frac{2z}{M_\infty}}.
$$
\end{proposition}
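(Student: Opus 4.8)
The plan is to exploit the butterfly-arbitrage characterisation of Proposition~\ref{prop:noCSAStrictly} (together with Corollary~\ref{cor:EasyCondition1}) in the regime $\theta_\infty=\infty$, and to translate the resulting infinite family of bounds on $u\varphi(u)$ into a single differential inequality for $\Psi$ at $+\infty$. First I would recall that absence of static arbitrage forces, via Proposition~\ref{prop:noCSAStrictly}~(ii), that for every $u\in(0,\infty)$ and every $z\in\overline{Z}_+$,
\begin{equation*}
(u\varphi(u))^2\;\frac{1}{4u}\left(\frac{\Psi'(z)^2}{\Psi(z)}-2\Psi''(z)\right)+(u\varphi(u))^2\frac{\Psi'(z)^2}{16}\;\leq\;\left(1-\frac{z\Psi'(z)}{2\Psi(z)}\right)^2.
\end{equation*}
Letting $u\uparrow\infty$ and using $M_\infty=\lim_{u\uparrow\infty}u\varphi(u)$ (finite by Corollary~\ref{cor:EasyCondition1}, since $\Psi$ is asymptotically linear), the term carrying the factor $1/u$ drops out in the limit and one is left, for $z$ in the unbounded set $\overline{Z}_+$, with a clean pointwise inequality relating $\Psi$, $\Psi'$ and $z$ only, with constant $M_\infty$. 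Concretely this should read
\begin{equation*}
M_\infty\,\frac{\Psi'(z)}{4}\;\leq\;1-\frac{z\Psi'(z)}{2\Psi(z)},
\end{equation*}
at least along a sequence $z\to+\infty$ in $\overline Z_+$, and in fact one needs it for all large $z$, which is where the structure of $\overline Z_+$ (neither empty nor bounded above, by Proposition~\ref{prop:noCSAStrictly}) and the asymptotic linearity of $\Psi$ (so $\Psi'\to\alpha_+$, $\Psi''\to0$, hence eventually $\overline Z_+$ contains a whole half-line) come in.

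Next I would rearrange this into a first-order differential inequality. Multiplying through by $\Psi(z)>0$ and collecting the $\Psi'(z)$ terms gives something of the shape $\Psi'(z)\left(\tfrac{M_\infty}{4}\Psi(z)+\tfrac z2\right)\leq \Psi(z)$, i.e.
\begin{equation*}
\Psi'(z)\;\leq\;\frac{\Psi(z)}{\frac{M_\infty}{4}\Psi(z)+\frac z2}\qquad\text{for all }z\geq z_+,
\end{equation*}
for some threshold $z_+\geq0$. The idea is then to integrate this. The natural change of variable is to consider $G(z):=\Psi(z)-\tfrac{2z}{M_\infty}$ or, better, to compare $\Psi$ against the explicit solution of the associated ODE $\Psi'=\Psi/(\tfrac{M_\infty}{4}\Psi+\tfrac z2)$. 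That ODE is separable-ish if one treats $z$ as a function of $\Psi$: $\frac{\D z}{\D \Psi}=\frac{M_\infty}{4}+\frac{z}{2\Psi}$, which is linear in $z$ with integrating factor $\Psi^{-1/2}$, and solving it yields $z=\tfrac{M_\infty}{2}\Psi-c\sqrt\Psi$ for a constant $c$, equivalently $\Psi=\kappa^2+\tfrac{2z}{M_\infty}-\kappa\sqrt{\kappa^2+\tfrac{2z}{M_\infty}}$ after completing the square with $\kappa:=2c/M_\infty$. A Gronwall/comparison argument for the differential inequality then gives the claimed upper bound $\Psi(z)\leq \kappa^2+\tfrac{2z}{M_\infty}-\kappa\sqrt{\kappa^2+\tfrac{2z}{M_\infty}}$ for $z\geq z_+$, with $\kappa\geq0$ chosen from the initial datum at $z_+$ (non-negativity of $\kappa$ should follow because $\Psi(z)\leq \tfrac{2z}{M_\infty}+O(1)$, the slope $2/M_\infty$ dominating the actual slope $\alpha_+\leq 2/M_\infty$ by Corollary~\ref{cor:EasyCondition1}).

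The main obstacle I anticipate is justifying that the limiting inequality holds for \emph{all} sufficiently large $z$ rather than merely along a subsequence, and handling the passage to the limit $u\uparrow\infty$ rigorously: one must argue that for each fixed large $z$, eventually $z\in Z_+(u)$ (or that the $u\to\infty$ limit of the right-hand side of Proposition~\ref{prop:Coupling2} is controlled), so that the inequality $(u\varphi(u))^2\leq(\cdots)$ passes to the limit with $M_\infty^2$ on the left. Asymptotic linearity is the key lever here: it pins down $\Psi'(z)^2/\Psi(z)-2\Psi''(z)\to \alpha_+^2/(\text{linear})\to 0^+$ and $\Psi'(z)^2/16\to\alpha_+^2/16>0$, so for $z$ beyond some $z_+$ the denominator is positive and bounded below, $\overline Z_+\supseteq[z_+,\infty)$, and the infimum over $z$ can be compared term by term. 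A secondary technical point is the comparison lemma for the scalar differential inequality — standard, but one must be slightly careful that the right-hand side $\Psi/(\tfrac{M_\infty}{4}\Psi+\tfrac z2)$ is smooth and that $\Psi$ stays positive so no blow-up occurs on $[z_+,\infty)$; this is guaranteed by Assumption~\ref{assu:SVI}(iii)/\ref{assu:payoffBis}. Once these are in place the integration is routine and yields exactly the stated bound.
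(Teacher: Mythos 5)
Your proposal is correct and follows essentially the same route as the paper's proof: both reduce to the pointwise inequality $4/\Psi'(z)-2z/\Psi(z)\geq M_\infty$ for $z\geq z_+$ (the paper invokes Corollary~\ref{cor:EasyCondition1} directly, while you rederive it by letting $u\uparrow\infty$ in Proposition~\ref{prop:noCSAStrictly}(ii)), and both then integrate the resulting first-order differential inequality by passing to the inverse function of $\Psi$ with integrating factor $\Psi^{-1/2}$ (the paper's $g(u)$ with factor $u^{-1/2}$) before solving the quadratic in $\sqrt{\Psi}$ to obtain the stated bound. The technical points you flag---eventual positivity of $4/\Psi'-2z/\Psi$ so the absolute value can be dropped on a half-line, and non-negativity of $\kappa$---are precisely the ones the paper handles via its choice of $z_+$ and its remark on the sign of $K_s$.
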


Using this proposition, we now move on to specific examples of non-SVI families.

\subsection{First example of non-SVI function}\label{sec:nonSVI}
We here provide a triplet $(\theta,\varphi,\Psi)$, different from the SVI form~\eqref{eq:SVIW}, 
which characterises an arbitrage-free volatility surface via~\eqref{eq:GenSVI}. 
Let~$\theta_t\equiv t$ and 
$$
\varphi(u) :=
\left\{
\begin{array}{ll}
\displaystyle \frac{1-\E^{-u}}{u}, & \text{ if $u>0$},\\ 
1, & \text{ if $u=0$},
\end{array}
\right.
\qquad\text{and}\qquad
\Psi(z) := 
\displaystyle |z| + \frac{1}{2}\left(1+\sqrt{1+|z|}\right), 
\quad\text{for all }z\in\RR.
$$
A few remarks are in order:
\begin{itemize}
\item the function~$\varphi$ is continuous on~$\RR_+$;
\item $\theta_\infty=\infty$;
\item the map $u \mapsto u\varphi(u)$ is increasing and its limit is $M_\infty=1$;
\item the function $\Psi$---directly inspired from the computations in Proposition~\ref{prop:EffBoundPsi}---is symmetric and continuous on $\RR$.
It is also $\mathcal{C}^\infty$ on $\RR\setminus\{0\}$, and asymptotically linear. 
Its derivative is therefore $\mathcal{C}^1$ piecewise and has a positive jump at the origin, 
so that Propositions~\ref{prop:weak} and~\ref{prop:noCSAStrictlyweak} apply.
\end{itemize}
With these functions, the total implied variance~\eqref{eq:GenSVI} reads
$$
w(k,t) = k\left(1-\E^{-t}\right)+\frac{\sqrt{t}}{2}\left(\sqrt{t}+\sqrt{k\left(1-\E^{-t}\right)+t}\right),
\qquad\text{for all }k\in\RR,t\geq 0,
$$
and the following proposition (proved in Appendix~\ref{sec:prop:nonSVI1}) is the main result here:
\begin{proposition}\label{prop:nonSVI1}
The surface~$w$ is free of static arbitrage.
\end{proposition}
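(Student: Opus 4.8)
The plan is to verify the three conditions for absence of static arbitrage separately: no calendar spread arbitrage, no butterfly arbitrage, and nullity of the time-zero smile, the latter being immediate since $\theta_0 = 0$ and Assumption~\ref{assu:SVI}(iv) holds (indeed $\varphi(u)\to 1$ as $u\downarrow 0$, so $w(k,t)\to 0$ for each fixed $k$, and in fact $w(k,t)/k \to 0$ as well so the LMB condition is also satisfied in the limit). For the calendar spread part I would invoke Proposition~\ref{prop:first_coupling}: $\theta_t = t$ is clearly non-decreasing, so it remains to check $1 + F(z)f(u) \geq 0$ for all $z\in\RR$ and $u>0$. Since $\Psi$ is asymptotically linear and symmetric, one computes $F(z) = z\Psi'(z)/\Psi(z)$, which is bounded and, by the symmetry, satisfies $\sup F_- = 0$ (the function $F$ should be non-negative, vanishing only at the origin, as in the symmetric SVI case); then by Remark~\ref{rem:F} the coupling condition reduces to $1 - (\sup F_+)(\sup f_-) \geq 0$. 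With $\varphi(u) = (1-\E^{-u})/u$ we have $u\varphi(u) = 1-\E^{-u}$, whose logarithmic derivative is $f(u) = u\E^{-u}/(1-\E^{-u}) > 0$, hence $\sup f_- = 0$ and the condition holds trivially. (Equivalently, $u\mapsto u\varphi(u)$ is increasing, and Proposition~\ref{prop:first_lin} combined with the structure of $F$ closes this.)

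The substance of the proof is the butterfly condition, handled via Proposition~\ref{prop:noCSAStrictlyweak} (we are in the non-smooth framework because $\Psi'$ has a jump at $0$). First I would record the relevant ingredients for $\Psi(z) = |z| + \tfrac12(1+\sqrt{1+|z|})$: for $z\neq 0$, writing $y := \sqrt{1+|z|}$, one has $\Psi'(z) = \sgn(z)\bigl(1 + \tfrac{1}{4y}\bigr)$, $\Psi''(z) = -\tfrac{1}{8y^3}$, and $\Psi(z) = |z| + \tfrac12(1+y) = y^2 - 1 + \tfrac12(1+y)$. The jump of $\Psi'$ at $0$ is $\Psi'(0_+) - \Psi'(0_-) = 2(1+\tfrac14) = \tfrac52 > 0$, so the Dirac-mass part of $\Ll w$ is non-negative, dispatching the last clause of Proposition~\ref{prop:noCSAStrictlyweak}. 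Next I would determine $\widetilde{\overline{Z}}_+$ by computing the sign of $\Psi'(z)^2/\Psi(z) - 2\Psi''(z) = \bigl(\Psi'(z)^2 - 2\Psi(z)\Psi''(z)\bigr)/\Psi(z)$; since $\Psi > 0$, this reduces to a one-variable inequality in $y \geq 1$, and I expect $\widetilde{\overline{Z}}_+$ to be all of $\RR\setminus\{0\}$ (or the complement of a bounded symmetric interval), so that $\widetilde{\overline{Z}}_-$ is small and condition~(i) becomes a finite check (here $\theta_\infty = \infty$ and $M_\infty = 1$, so~(i) reads $1 \leq \inf_{z\in\widetilde{\overline{Z}}_-} |4/\Psi'(z) - 2z/\Psi(z)|$, an elementary bound if $\widetilde{\overline{Z}}_-$ is empty or a tiny interval).

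The heart of the matter is condition~(ii): we must show that $(u\varphi(u))^2 = (1-\E^{-u})^2 \leq \inf_{z\in\widetilde{\overline{Z}}_+} g_u(z)$, where $g_u(z)$ is the displayed ratio. The strategy is to bound the infimum from below uniformly. Using $1-\E^{-u} < 1$ for all $u > 0$, it suffices to show $g_u(z) \geq 1$; moreover, since the denominator of $g_u$ is increasing in $1/u$ via the term $\tfrac{1}{4u}(\Psi'^2/\Psi - 2\Psi'')$ which is positive on $\widetilde{\overline{Z}}_+$, we have $g_u(z) \geq \lim_{u\uparrow\infty} g_u(z) = 16\bigl(1 - \tfrac{z\Psi'(z)}{2\Psi(z)}\bigr)^2/\Psi'(z)^2$, so it is enough to prove that the latter is $\geq 1$ for all $z\neq 0$ — i.e. $|4/\Psi'(z) - 2z/\Psi(z)| \geq 1$. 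Substituting the expressions above, with $t := |z| = y^2 - 1$, this becomes a single-variable inequality in $y \in (1,\infty)$: after clearing denominators it reduces to a polynomial inequality that I would verify by elementary calculus (finding where the difference is minimized and checking the minimum value). If the naive bound $g_u(z)\geq 16(\dots)^2/\Psi'^2 \geq 1$ turns out to fail for small $u$ or near $z=0$, the fallback is to keep the $u$-dependence, use $(1-\E^{-u})^2 \leq u^2$ (or a sharper elementary bound), and exploit the $\tfrac{1}{4u}$ factor in the denominator which grows as $u\downarrow 0$; since $\Psi'^2/\Psi - 2\Psi''$ stays bounded away from $0$ on $\widetilde{\overline{Z}}_+$ near $z = 0$, the denominator blows up, making the ratio large. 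The main obstacle is thus the explicit polynomial inequality in $y$; I expect it to hold comfortably (the function $\Psi$ was reverse-engineered in Proposition~\ref{prop:EffBoundPsi} precisely to sit at or below the arbitrage boundary with $M_\infty = 1$), but one must be careful near $z = 0$ where $\widetilde{\overline{Z}}_+$ may exclude a neighbourhood and where the jump term already does part of the work.
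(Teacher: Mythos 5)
Your overall architecture matches the paper's (calendar spread via Proposition~\ref{prop:first_coupling}, butterfly via the non-smooth Propositions~\ref{prop:weak}--\ref{prop:noCSAStrictlyweak}, with the positive jump $\Psi'(0^+)-\Psi'(0^-)=5/2$ handling the Dirac mass), and your identification of $\widetilde{\overline{Z}}_+$ as essentially all of $\RR$ is confirmed by the paper's computation of $\Psi'^2/\Psi-2\Psi''$. But the step you yourself call ``the heart of the matter'' rests on a reversed inequality. On $\widetilde{\overline{Z}}_+$ the term $\frac{1}{4u}\bigl(\Psi'^2/\Psi-2\Psi''\bigr)$ is positive and \emph{decreases} as $u$ increases, so the denominator of $g_u$ decreases and $g_u(z)$ is \emph{increasing} in $u$; hence $g_u(z)\leq\lim_{u\uparrow\infty}g_u(z)=16\bigl(1-\tfrac{z\Psi'}{2\Psi}\bigr)^2/\Psi'(z)^2$, not $\geq$. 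Proving the limit quantity is $\geq1$ therefore does not bound $g_u$ from below for finite $u$ --- indeed, if your monotonicity claim were right, condition~(ii) of Proposition~\ref{prop:noCSAStrictlyweak} would collapse to a limit condition like~(i), which is precisely not how that proposition is structured. What you label a ``fallback'' is actually the required argument: one must control the extra term $\frac{(u\varphi(u))^2}{4u}\bigl(\Psi'^2/\Psi-2\Psi''\bigr)=\frac{(1-\E^{-u})^2}{4u}\Phi(z)$ uniformly. The paper does this by showing $\bigl|4/\Psi'-2z/\Psi\bigr|$ decreases to $2$ (so the numerator of $g_u$ is at least $\Psi'^2/4$, with $\Psi'^2\geq1$), leaving a margin of $3\Psi'^2/16\geq3/16$ that absorbs the $\Phi$-term, whose coefficient $\frac{(1-\E^{-u})^2}{4u}$ is uniformly small. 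You would need to carry out this estimate explicitly; as written the main line of your butterfly argument fails and the fallback is only sketched.

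Secondly, your calendar-spread verification miscomputes $f$: you take the logarithmic derivative of $u\varphi(u)=1-\E^{-u}$ rather than of $\varphi$. By~\eqref{eq:Ff}, $f(u)=u\varphi'(u)/\varphi(u)=\frac{(u+1)\E^{-u}-1}{1-\E^{-u}}$, which is \emph{negative}, decreasing from $0$ to $-1$. Consequently $\sup f_-=1$, not $0$, and the coupling condition does not hold ``trivially'': you must check $\sup F_+\leq1$, i.e.\ $z\Psi'(z)<\Psi(z)$, which is true here (and is where asymptotic linearity enters) but requires the explicit computation of $F$ that the paper performs. The conclusion of Proposition~\ref{prop:nonSVI1} survives both corrections, but neither step is complete as you have written it.
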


  \begin{figure}[htb!]
  \begin{center}
  \subfigure{\includegraphics[scale=0.3]{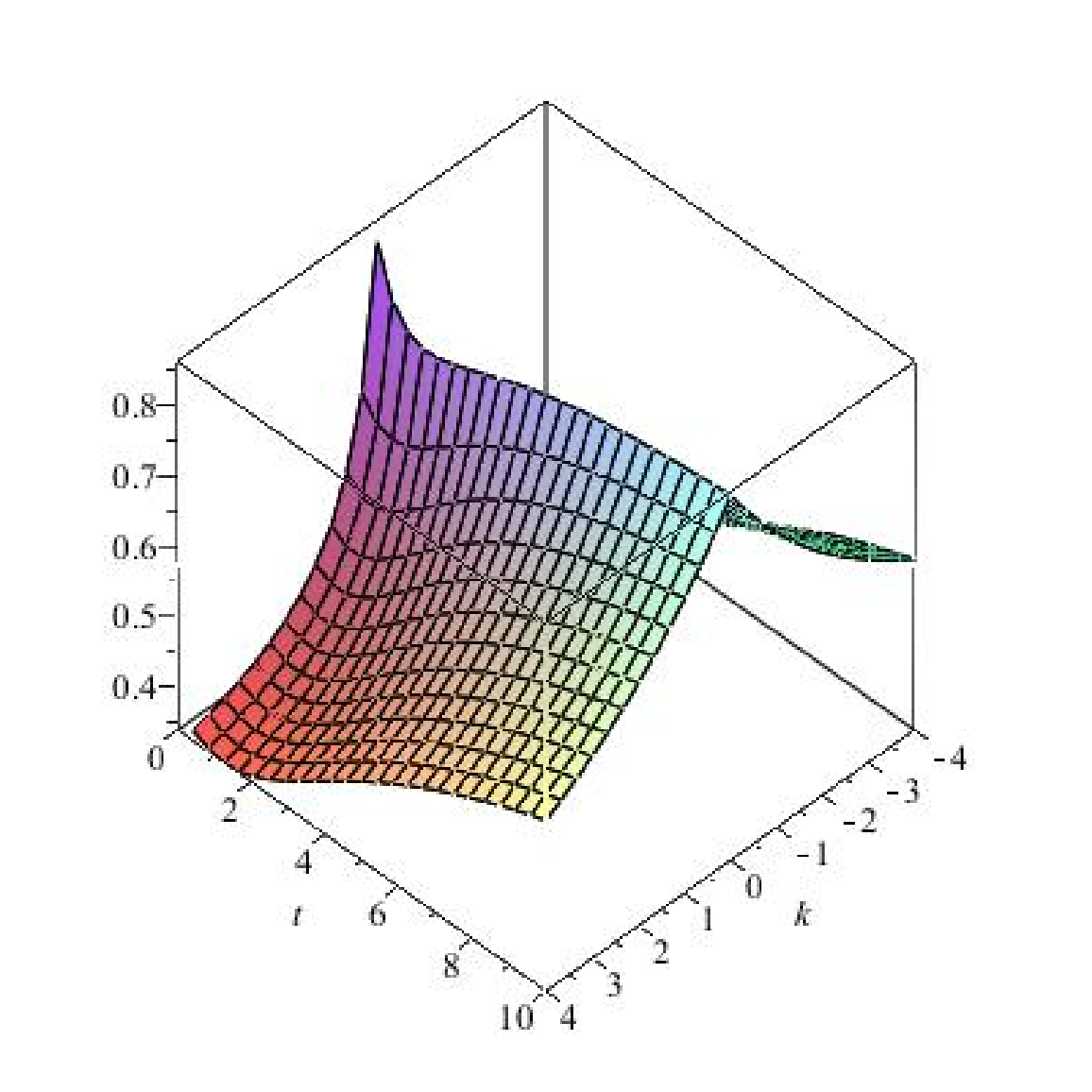}}
  \subfigure{\includegraphics[scale=0.25]{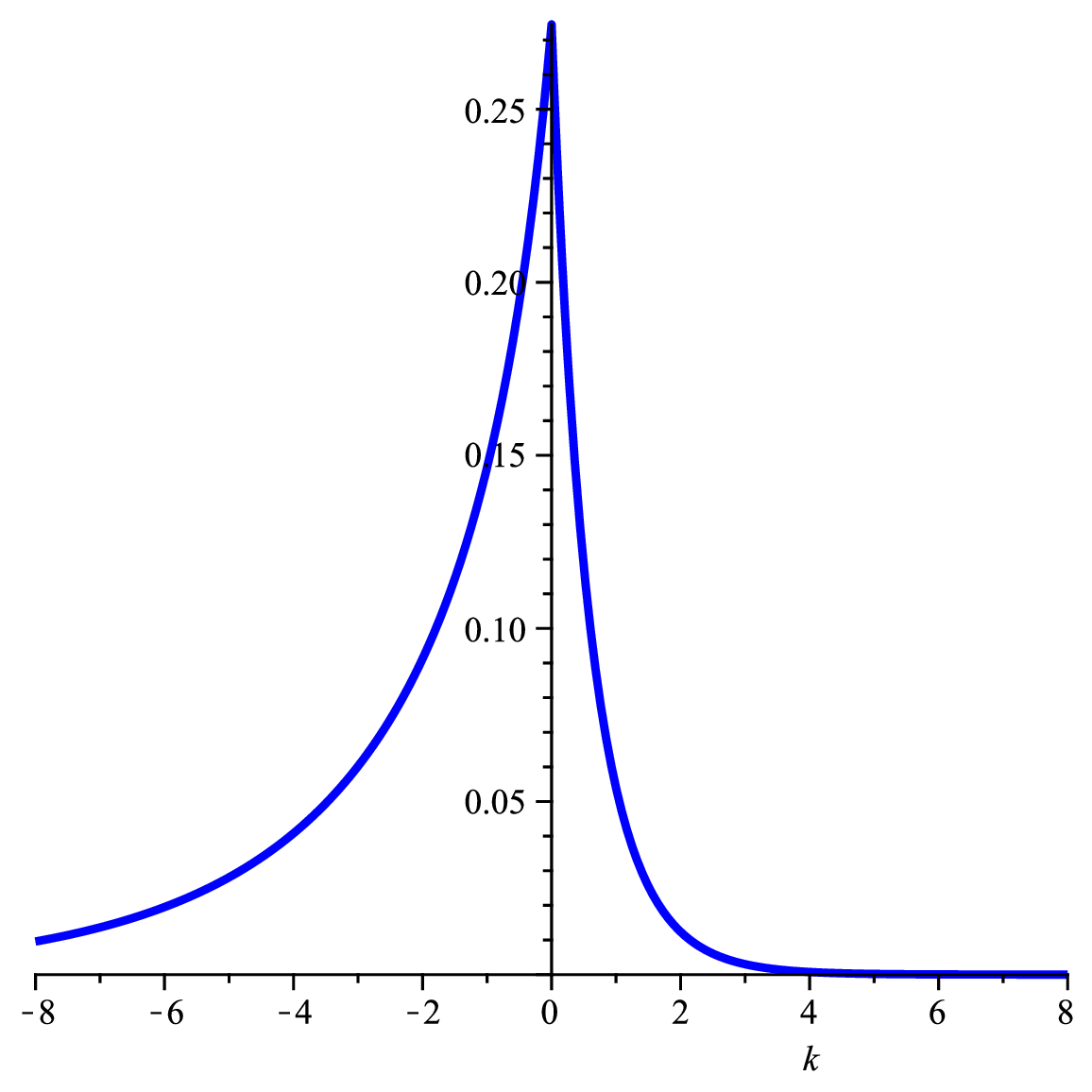}}
  \caption{Plot of the map $(k,t)\mapsto\mathcal{L}w(k,t)$ (left) 
  in the non-SVI case of Section~\ref{sec:nonSVI}, 
  and of the density at time $t=1$ (right).
}
  \end{center}
  \end{figure}

\subsection{Second example of non-SVI function}\label{sec:nonSVI2}
We propose a new triplet $(\theta,\varphi,\Psi)$ characterising an arbitrage-free volatility surface via~\eqref{eq:GenSVI}. 
Let~$\theta_t\equiv t$ and 
$$
\varphi(u) :=
\left\{
\begin{array}{ll}
\displaystyle \alpha\frac{1-\E^{-u}}{u}, & \text{ if $u>0$},\\ 
\alpha, & \text{ if $u=0$},
\end{array}
\right.
\qquad\text{and}\qquad
\Psi_\nu(z) := 
\displaystyle \left(1+|z|^\nu\right)^{1/\nu},
\quad\text{for }z\in\RR, 
$$
where $\nu \in (1,\infty)$ and $\alpha\in (0,\overline{\alpha})$ with $\overline{\alpha}\approx 1.33$.
Note that when $\nu=2$, modulo a constant, the function $\Psi_2$ corresponds to SVI.
We could in principle let $\alpha$ depend on $\nu$. 
The reason for the construction above is that 
we want to show that the corresponding implied volatility surface is free of static arbitrage 
for all $\nu>1$.
The same remarks as in the example in Section~\ref{sec:nonSVI} hold:
$\varphi$ is continuous on~$\RR_+$, $\theta_\infty=\infty$, 
$u \mapsto u\varphi(u)$ is increasing to $M_\infty=\alpha$
and $\Psi_\nu$ is symmetric and continuous on $\RR$. 
It is also $\mathcal{C}^\infty$ on $\RR\setminus\{0\}$,  $\mathcal{C}^1$ on $\RR$, and asymptotically linear.
The derivative $\Psi'$ has a positive jump at 0, so that we are back in the framework of Propositions~\ref{prop:weak} and~\ref{prop:noCSAStrictlyweak}.
With these functions, the total implied variance~\eqref{eq:GenSVI} reads
$$
w(k,t) = \theta_t\left(1+\frac{(1-\E^{-\theta_t})^\nu}{\theta_t^\nu}\alpha^\nu|k|^\nu\right)^{1/\nu},
\quad\text{for all }k\in\RR, t> 0,
$$
and we can check all the conditions preventing arbitrage (the proof is postponed to Appendix~\ref{sec:prop:nonSVI2}):
\begin{proposition}\label{prop:nonSVI2}
The surface~$w$ is free of static arbitrage.
\end{proposition}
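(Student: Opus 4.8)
The plan is to verify, one by one, the sufficient conditions for absence of static arbitrage developed in the previous sections, specialised to the triplet $(\theta,\varphi,\Psi_\nu)$ with $\theta_t\equiv t$. By Roper's criterion it suffices to check (a) absence of calendar spread arbitrage, (b) absence of butterfly arbitrage, (c) the null time-zero smile, which is already guaranteed by Assumption~\ref{assu:SVI}(iv) together with the remarks made just before the statement ($\varphi$ continuous on $\RR_+$, $\Psi_\nu$ asymptotically linear, $\lim_{\theta\downarrow 0}\theta\varphi(\theta)=0$, so Assumption~\ref{assu:payoffBis} applies). So the real content is (a) and (b).

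For calendar spread arbitrage I would invoke Proposition~\ref{prop:first_coupling}. Condition (i) is immediate since $\theta_t=t$ is non-decreasing. For condition (ii) I would compute $F(z)=z\Psi_\nu'(z)/\Psi_\nu(z)$ explicitly: since $\Psi_\nu'(z)=\sgn(z)|z|^{\nu-1}(1+|z|^\nu)^{1/\nu-1}$, one gets $F(z)=|z|^\nu/(1+|z|^\nu)\in[0,1)$, so $\sup F_+=1$ and $\sup F_-=0$. By Remark~\ref{rem:F} it remains only to check $1-\sup F_+\,\sup f_-\geq 0$, i.e. $f(u)\geq -1$, which (as in the symmetric SVI corollary) is exactly the statement that $u\mapsto u\varphi(u)=\alpha(1-\E^{-u})$ is non-decreasing — which is obvious. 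Since $\Psi_\nu$ is asymptotically linear with $\alpha_\pm=\pm 1$, Proposition~\ref{prop:first_lin} is also consistent with this.

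For butterfly arbitrage I would use Proposition~\ref{prop:noCSAStrictlyweak} (we are in the non-smooth framework because $\Psi_\nu'$ has a jump at $0$; that jump is positive, namely from $-1$ to $+1$ when $\nu>1$... more precisely $\Psi_\nu'(0^-)=0=\Psi_\nu'(0^+)$ for $\nu>1$ with a kink in the second derivative, so the ``jumps of $\Psi'$ are non-negative'' condition holds trivially). Since $\theta_\infty=\infty$, condition (i) reduces to $M_\infty=\alpha\leq \inf_{z\in\widetilde{\overline{Z}}_-}|4/\Psi_\nu'(z)-2z/\Psi_\nu(z)|$, and condition (ii) to $(u\varphi(u))^2=\alpha^2(1-\E^{-u})^2\leq \inf_{z\in\widetilde{\overline{Z}}_+}\Phi_\nu(z,u)$ for all $u>0$, where $\Phi_\nu$ is the ratio appearing there. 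I would first identify $\widetilde{\overline{Z}}_+$ by computing $\Psi_\nu'(z)^2/\Psi_\nu(z)-2\Psi_\nu''(z)$ in closed form (a rational function of $|z|^\nu$) and locating where it is positive; then, using that $u\varphi(u)=\alpha(1-\E^{-u})$ is bounded by $\alpha$ and increasing, I would bound the right-hand side of (ii) from below uniformly in $u$ — the monotonicity of $u\mapsto u\varphi(u)$ plus the fact that the denominator in $\Phi_\nu$ is increasing in $u$ reduces matters to the limiting case $u\to\infty$, i.e. to condition (i). The main obstacle is therefore the purely analytic estimate that $\alpha<\overline\alpha\approx 1.33$ guarantees $\alpha\leq \inf_{z\in\widetilde{\overline{Z}}_-}|4/\Psi_\nu'(z)-2z/\Psi_\nu(z)|$ \emph{uniformly in $\nu>1$}: one must show that this infimum, as a function of $\nu$, stays above $\overline\alpha$, which I expect to require studying the one-variable function $z\mapsto 4/\Psi_\nu'(z)-2z/\Psi_\nu(z)$ on the compact-ish set $\widetilde{\overline{Z}}_-$, differentiating in $z$, and then checking the worst case over $\nu$ — likely the only place where the numerical constant $\overline\alpha$ genuinely enters. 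Finally I would note that $\alpha<2$ gives $\lim_{k\uparrow\infty}w(k,t)/k=t\varphi(t)\alpha_+<2$, hence the LMB condition, completing Definition~\ref{def:Butterfly} and thus, via Roper, the proof.
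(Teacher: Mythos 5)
Your treatment of calendar spread arbitrage matches the paper's: $F(z)=|z|^\nu/(1+|z|^\nu)\in[0,1)$, $f(u)\ge -1$ because $u\mapsto\alpha(1-\E^{-u})$ is non-decreasing, and Proposition~\ref{prop:first_coupling} applies. The butterfly part, however, contains a genuine gap. You propose to reduce condition~(ii) of Proposition~\ref{prop:noCSAStrictlyweak} to the limiting case $u\to\infty$, i.e.\ to condition~(i), on the grounds that $u\mapsto u\varphi(u)$ is increasing and ``the denominator is increasing in $u$''. On $\widetilde{\overline{Z}}_+$ one has $\Phi_\nu(z):=\Psi_\nu'(z)^2/\Psi_\nu(z)-2\Psi_\nu''(z)>0$, so the denominator $\frac{1}{4u}\Phi_\nu(z)+\frac{\Psi_\nu'(z)^2}{16}$ is \emph{decreasing} in $u$ and the right-hand side of (ii) is \emph{increasing} in $u$ --- exactly like the left-hand side $(u\varphi(u))^2$. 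An inequality between two increasing functions in the limit $u\to\infty$ does not imply the inequality for finite $u$, so the reduction fails; the comparison must be made pointwise in $u$. The paper does this by bounding the ratio below by $\left(\frac{\Phi_\nu(z)}{u}+\frac14\right)^{-1}$, showing $\sup_z\Phi_\nu(z)<1$, and then studying $g_\alpha(u):=(u\varphi(u))^2\left(\frac1u+\frac14\right)$, whose maximum over $u$ (attained at $u^*\approx 1.87$) is what produces the constant $\overline{\alpha}=g_1(u^*)^{-1/2}\approx 1.33$.

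Consequently you have also misidentified where $\overline{\alpha}$ enters. Condition~(i) (with $\theta_\infty=\infty$) reduces to $\alpha\le z_\nu:=4\nu(2\nu-2)^{(1-\nu)/\nu}(2\nu-1)^{-1/\nu}$, the value of $\left|4/\Psi_\nu'-2z/\Psi_\nu\right|$ at the endpoints $z_\pm^*=\pm[2(\nu-1)]^{1/\nu}$ of $\widetilde{\overline{Z}}_-=[z_-^*,z_+^*]$; since $\inf_{\nu>1}z_\nu=2$, condition~(i) only requires $\alpha\le 2$ uniformly in $\nu$, not $\alpha\le\overline{\alpha}$. The binding constraint is condition~(ii). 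A minor further point: for $\nu>1$ the derivative $\Psi_\nu'$ is in fact continuous at the origin ($\Psi_\nu'(0^\pm)=0$); the distributional issue for $1<\nu<2$ concerns $\Psi_\nu''$, whose singular part at the origin is non-negative because $\Psi_\nu$ is convex --- your conclusion there is right but the justification via a ``jump of $\Psi'$ from $-1$ to $+1$'' is not.
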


  \begin{figure}[h]
  \begin{center}
  \subfigure{\includegraphics[scale=0.3]{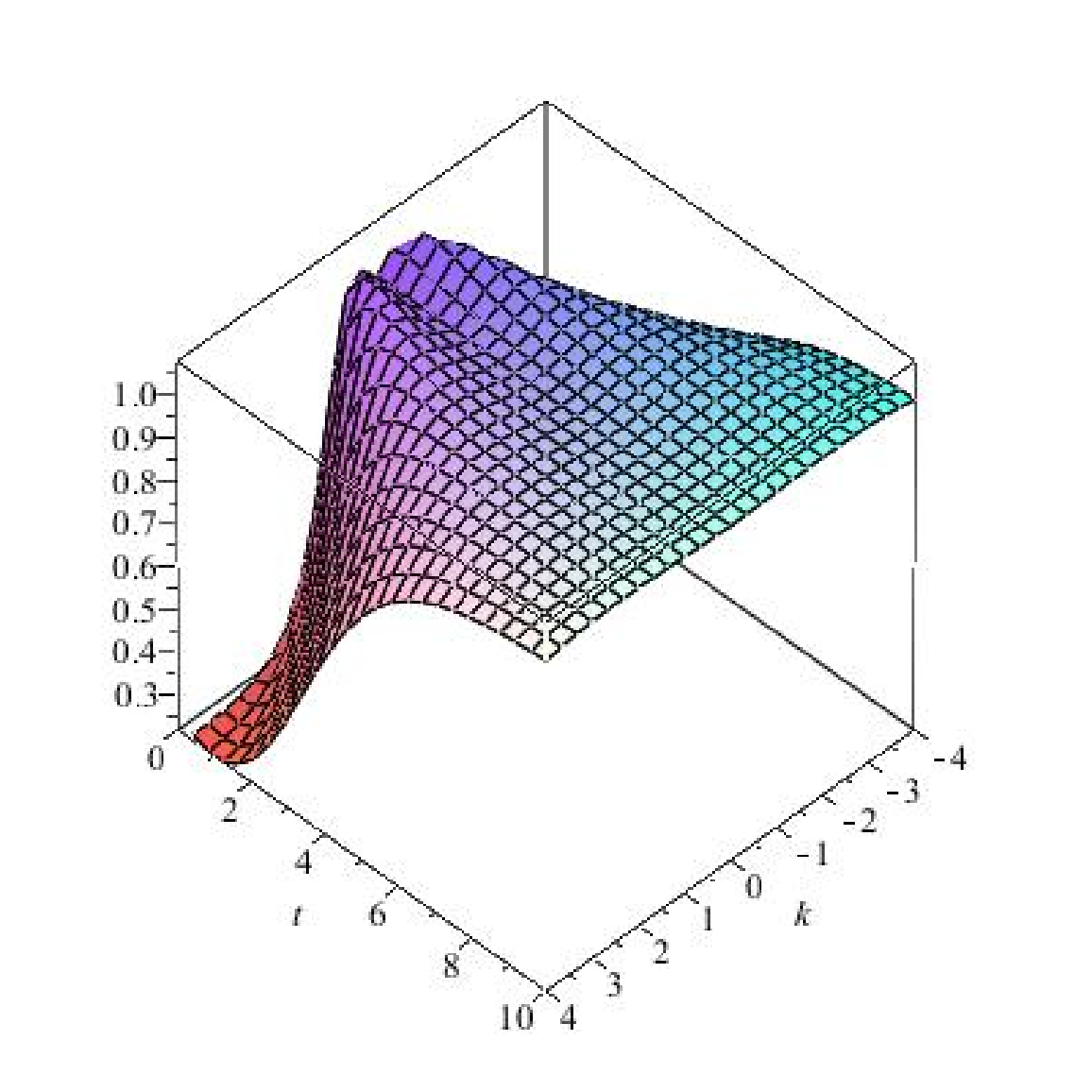}}
  \subfigure{\includegraphics[scale=0.25]{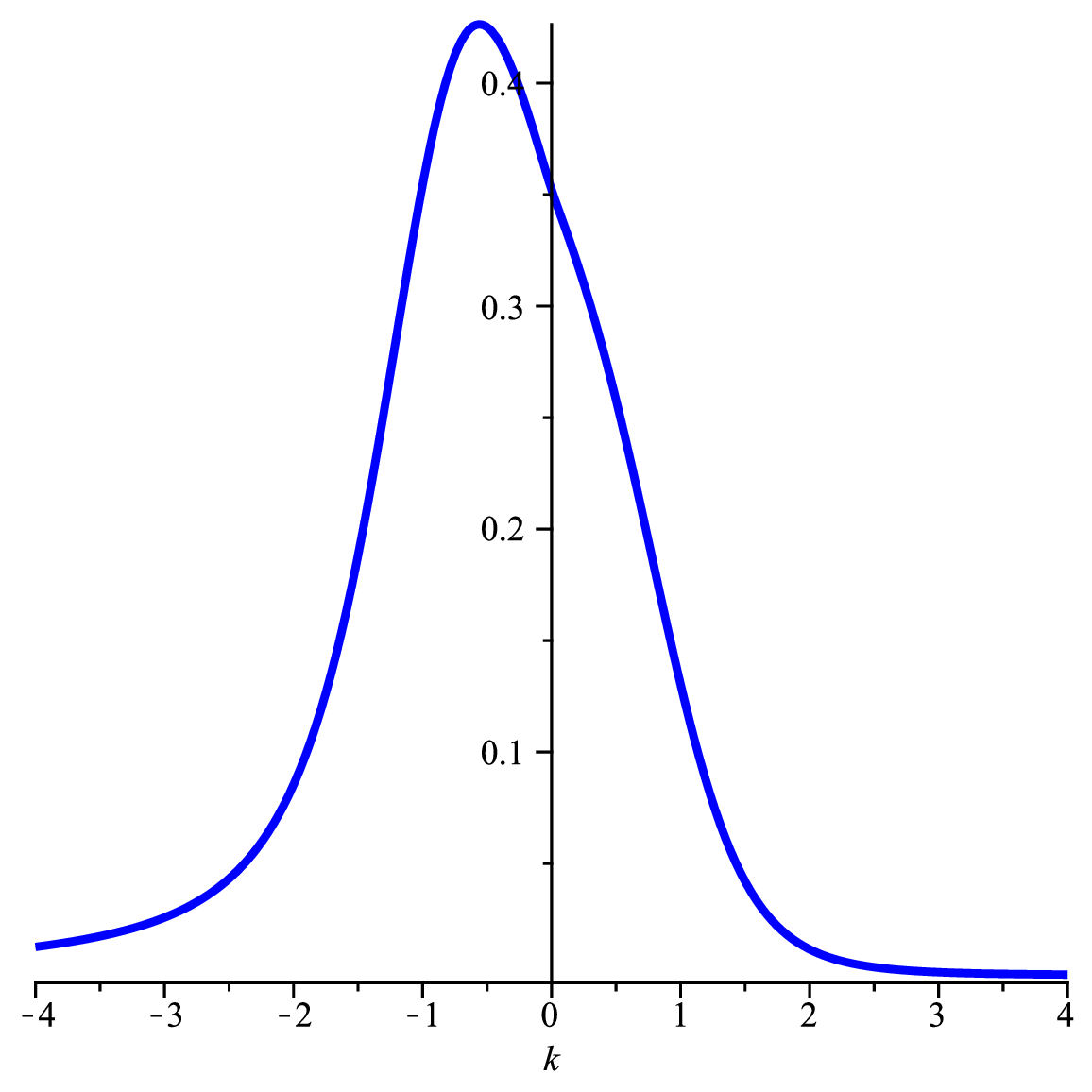}}
  \caption{Plot of the map $(k,t)\mapsto\mathcal{L}w(k,t)$ (left) 
  in the non-SVI case of Section~\ref{sec:nonSVI2}, 
  and of the density at time $t=1$ (right), with $\nu=3.5$ and $\alpha=1$.
 Here the density does not have a spike at the origin.
 }
  \end{center}
  \end{figure}



\appendix
\section{Proofs}
\subsection{Proof of Proposition~\ref{prop:nomass}}\label{sec:prop:nomass}
The functions~$p_-$ and $p_+$ are clearly well-defined and non-negative. 
Consider first $p_-$. 
It is readily seen that the function  $D(k) \equiv \partial_K \BS(K, v(\log(K)))|_{K=\E^{k}}$
is a primitive of $p_-$. We now proceed to prove that~$p_-$ is indeed a density.
Let $\Nn$ denote the cumulative distribution function of the standard Gaussian distribution.
An explicit computation yields (the reverse one can be found in~\cite[Lemma 2.2]{gj})
$$
\partial_K \BS(K, v(\log(K)))
  =  \frac{\E^{-d_+^2/2}\partial_K d_+}{\sqrt{2\pi}} - \Nn (d_-)
   - \frac{\E^{-d_-^2/2}K\partial_K d_-}{\sqrt{2\pi}}
 = \frac{\E^{-d_+^2/2}}{\sqrt{2\pi}}\left(\partial_K d_+-\partial_K d_-\right)-\Nn (d_-),
$$
where $d_\pm$ and their derivatives are evaluated at $(\log(K),v(\log(K)))$, 
and where we have used the identity $K\Nn'(d_-(\cdot))=\Nn'(d_+(\cdot))$.
Evaluating the right-hand side at $K=\E^{k}$, using 
$-k-\frac{1}{2}d_+^2=-\frac{1}{2}d_-^2$, we obtain
$$
D(k) = \frac{v'(k)}{2\sqrt{2\pi v(k)}}\exp\left(-k - \frac{d_-(k,v(k))^2}{2}\right)-\Nn (d_-).
$$
Therefore if
\begin{equation}\label{eq:limvprime}
\lim_{k\to\pm\infty} \frac{v'(k)}{2\sqrt{2\pi v(k)}}\exp\left(-k - \frac{d_-(k,v(k))^2}{2}\right) = 0, 
\end{equation}
then 
$$
\int_{\RR} p_-(k) \D k = 
\lim_{k\downarrow-\infty}\Nn (d_-(k,v(k)))-\lim_{k\uparrow \infty}\Nn (d_-(k,v(k)))
=1,
$$
where we have used the SMB Condition in Assumption~(II) and Lemma~\ref{lem:RT10}(i).
We now prove~\eqref{eq:limvprime}, and consider first the case when~$k$ tends to (positive) infinity. 
From Lemma~\ref{lem:RT10}(i), $\exp\left(-k-\frac{1}{2}d_-(k,v(k))^2\right)$ tends to zero.
The key point is that $D$ is the primitive of a non-negative function, 
therefore is non-decreasing with a (generalised) limit~$L\in (-\infty,\infty]$ as
$k\uparrow\infty$. 
Since $\Nn \left(d_-(k,v(k))\right)$ converges to zero by Lemma~\ref{lem:RT10}(ii), we deduce that
$\frac{v'(k)}{2\sqrt{2 \pi v(k)}}\exp\left(-k-\frac{1}{2}d_-(k,v(k))^2\right)$ also converges to~$L$.
From~\cite[Proof of Theorem 5.3]{RT10}, 
the inequality $v'(k)<\sqrt{2 v(k)/k}$ holds for any $k>0$ so that $L$ is necessarily non-positive. 
Assume that $L$ is negative; 
since $v'(k)/(2 \sqrt{v(k)}) \equiv \partial_k \sqrt{v(k)}$, 
then $\sqrt{v}$ is eventually decreasing. 
Since it is bounded from below by zero, 
there exists a sequence $(k_n)_{n\geq 0}$ going to infinity such that $\partial_k\sqrt{v(k_n)}$ converges to zero by the mean value theorem, and hence~$L=0$.

Let us now consider the case where~$k$ tends to negative infinity. 
Using similar arguments, the quantity $\frac{v'(k)}{2\sqrt{2 \pi v(k)}}\exp\left(-\frac{1}{2}d_-(k,v(k))^2\right)$ 
tends to~$M\in [-\infty,\infty)$.
Assume that $M<0$. 
Then $v$ is decreasing for~$k$ small enough. 
Since $v$ is positive, this implies that $v(k)>\varepsilon$ for some~$\varepsilon>0$ and~$k$ small enough.
In particular $1/v(k)$ is bounded. 
Since $v'(k)>-4$ for all $k\in\RR$ by~\cite[Theorem 5.1]{RT10}, then for~$k$ small enough, 
the inequalities $-4< v'(k)\leq 0$ hold, and the term outside the exponential in~\eqref{eq:limvprime} is bounded. 
Since the exponential converges to zero by Lemma~\ref{lem:RT10}(ii), we obtain~$M=0$.
If $M>0$, then~$v$ is increasing for $k$ small enough. 
We conclude as above by the mean value theorem since $\sqrt{v}$ is increasing and bounded from below.
Therefore~$M=0$ and the limit~\eqref{eq:limvprime} holds.

So far we have proved that $p_-$ is the density of probability associated to Call option prices with implied volatility $k\mapsto v(k)$.
Consider now the function $w(k)\equiv(-k)$. 
Then for all~$k\in\RR$, $\partial_kw(k)=-\partial_kv(-k)$, and it follows by inspection that
$\Ll w(k)=\Ll v(-k)\geq 0$.
Consider the function~$\widehat{p}_-$ associated to $w$, i.e.
$$
\widehat{p}_-(k) := \left(2 \pi w(k)\right)^{-1/2}\exp\left(- \frac{1}{2}d_{-}^2(k,w(k))\right) \Ll w(k),
\quad\text{for all }k\in\RR.
$$
Now $d_{-}(k,w(k)) \equiv -d_{+}(-k,v(-k))$,
so that $\widehat{p}_-(k) \equiv p_+(-k)$.
In order for $\widehat{p}_-$ to be a genuine density, 
we need to check conditions symmetric to those ensuring that~$p_-$ is a density.
The condition symmetric to the SMB assumption~(II) is precisely Condition~(i) in Lemma~\ref{lem:RT10},
and the condition symmetric to the Lemma~\ref{lem:RT10}(ii) is precisely the LMB assumption~(III).
Therefore $k\mapsto p_+(-k)$ is also a density, associated to a Call option price with implied variance~$w$.
Finally the identity~$p_+(k) = \E^{k} p_-(k)$ follows immediately from the equality
$-k-\frac{1}{2}d_+^2=-\frac{1}{2}d_-^2$.

\subsection{Proof of Proposition~\ref{prop:noCSAStrictly}}\label{sec:prop:noCSAStrictly}
Assume that $\Psi$ is asymptotically linear and that there is no calendar spread arbitrage.
The proof relies on the decomposition of the real line into the disjoint unions
$\RR = \overline{\Zz}_+ \cup \left(\overline{\Zz}_-\cap \Zz_+(u)\right) \cup \left(\overline{\Zz}_-\cap (\RR\setminus \Zz_+(u))\right)$, for any $u>0$.
As in the proof of Proposition~\ref{prop:Coupling2}, butterfly arbitrage is precluded on $\RR\setminus \Zz_+(u)$, 
so that we are left with~$\overline{\Zz}_+$ and~$\overline{\Zz}_-\cap \Zz_+(u)$.

Consider first Case~(ii).
If $z \in \aleph^c$, the inequality in the proposition follows from~\eqref{ineq:condfly}.
When $z\in\aleph$, in view of~\eqref{ineq:condfly}, the inequality $\Ll w(k,t)\geq 0$ is equivalent to
$(\theta_t \varphi(\theta_t))^2 \Psi''(z) \geq -2$.
Since $\Psi''$ is strictly negative on $\overline{\Zz}_+\cap\aleph$, and non-negative on $\overline{\Zz}_-\cap\aleph$, absence of butterfly arbitrage on $\aleph$ is equivalent to
\begin{equation*}
\displaystyle (\theta_t \varphi(\theta_t))^2 \leq -\frac{2}{\Psi''(z)}
\text{ on }\overline{\Zz}_+\cap\aleph
\qquad\text{and}\qquad
\displaystyle (\theta_t \varphi(\theta_t))^2 \geq -\frac{2}{\Psi''(z)}
\text{ on }\overline{\Zz}_-\cap\aleph,
\end{equation*}
where the inequalities are trivial (bounds equal to $\pm\infty$) 
whenever $\Psi''(z)=0$.
In fact, on $\overline{\Zz}_-\cap\aleph$, this inequality is trivially satisfied, and the result holds.

Consider now Case~(i) in the proposition, which corresponds to the set $\overline{\Zz}_-\cap \Zz_+(u)$.
We can in fact restrict our attention to $\overline{\Zz}_-\cap \Zz_+(u)\cap\aleph^c$ since $\overline{\Zz}_-\cap \Zz_+(u)\cap\aleph$ is empty;
the map $u \mapsto u \varphi(u)$ is non-decreasing on~$\RR_+^*$ by Proposition~\ref{prop:first_lin}.
On $\overline{\Zz}_-\cap\aleph^c$ 
the map $u \mapsto \frac{1}{4 u} (\frac{\Psi'(z)^{2}}{\Psi(z)}-2\Psi''(z))+\frac{\Psi'(z)^{2}}{16}$
is clearly also non-decreasing on~$\RR_+^*$.
Therefore $(\overline{\Zz}_{-} \cap \Zz_+(u)\cap\aleph^c)_{u>0}$ is a non-decreasing family of sets 
and thus, in view of~\eqref{ineq:condfly}, 
absence of butterfly arbitrage ($\mathcal{L}w\geq 0$) on this set is equivalent to 
$$
(u \varphi(u))^2 \leq \inf_{z \in \overline{\Zz}_- \cap \Zz_+(u)\cap\aleph^c} 
\Lambda(u,z),
\quad\text{for all }u \in (0, \theta_\infty),$$
when $\theta_\infty<\infty$, which in turn is equivalent to
$$ 
M_\infty^2 \leq \inf_{z \in \overline{\Zz}_{-} \cap \Zz_+(\theta_\infty)\cap\aleph^c}
 \Lambda(\theta_\infty,z).
$$
When $\theta_\infty=\infty$, the previous infimum is precisely 
$\inf\{|\frac{4}{\Psi'(z)}-\frac{2z}{\Psi(z)}|, 
z\in \overline{\Zz}_- \cap \left(\cup_{u>0} \Zz_+(u)\right)\cap\aleph^c \} $. 
Now, the set 
$\overline{\Zz}_- \cap \left(\cup_{u>0} \Zz_+(u)\right)\cap\aleph^c = \overline{\Zz}_{-}\cap\aleph^c$
is not empty, and therefore the last upper bound is also equal to 
$\inf\{|\frac{4}{\Psi'(z)}-\frac{2z}{\Psi(z)}|, z\in \overline{\Zz}_- \cap\aleph^c\} $.

We note in passing that~$\overline{\Zz}_+$ is not empty.
Otherwise, the asymptotic linearity of $\Psi$ allows us to choose $a > 0$ such that $\Psi'(z) > 0$ for all $z>a$.
Therefore $1/\Psi(z)\leq 2\Psi''(z)/\Psi'(z)^2$  for all $z>a$,
which in turn yields $\int_a^z \frac{\D b}{\Psi(b)} \leq 2(\Psi'(a)^{-1}-\Psi'(z)^{-1})$.
The integral diverges to infinity as~$z$ tends to infinity since $\Psi(z) \sim \alpha_+ z$
whereas the right-hand side is bounded by Definition~\ref{def:Linear}. 
The same argument shows that~$\overline{\Zz}_+$ is not bounded from above.

\subsection{Proof of Proposition~\ref{prop:EffBoundPsi}}\label{sec:prop:EffBoundPsi}
In the generalised SVI case~\eqref{eq:GenSVI}, the function~$\Psi$ is asymptotically linear 
(see Definition~\ref{def:Linear}) with 
$\lim_{z\uparrow \infty}\Psi'(z)=\alpha_+>0$, and $\theta_\infty = \infty$.
From subsection~\ref{sec:Linear} the condition $M_{\infty} \leq \left| \frac{4}{\Psi'(z)} - \frac{2 z}{\Psi(z)} \right|$
holds for all $z \in \Zz_+(\theta_\infty) = \mathbb{R}$.
Since $\lim_{z\uparrow\infty}\left(\frac{4}{\Psi'(z)} - \frac{2 z}{\Psi(z)}\right)=\frac{2}{\alpha_+}$, we can define
$$ z_+:= \inf \left\{z \in \mathbb{R}_+: \inf_{y\geq z}\left(\frac{4}{\Psi'(y)} - \frac{2 y}{\Psi(y)}\right) > 0 \right\}<\infty,$$
and therefore
\begin{equation}\label{eq:Inequal}
\frac{4}{\Psi'(z)} - \frac{2 z}{\Psi(z)} \geq M_{\infty},
\quad\text{ for all }z \geq z_+.
\end{equation}
Note that $M_\infty \leq \frac{2}{\alpha_+}$, and let $u_+ := \Psi(z_+)$.
Since the continuous function~$\Psi$ is increasing on $[z_+,+\infty)$, 
we can define its inverse~$g:[u_+,+\infty) \to [z_+,+\infty)$,
and hence from the equality $\exp\left(-\int_{u_+}^{u} \frac{\D v}{2 v}\right) = \sqrt{\frac{u_+}{u}}$, 
Equation~\eqref{eq:Inequal} reads
\begin{align*}
\frac{4}{\Psi'(z)} - \frac{2 z}{\Psi(z)} \geq M_{\infty}
 & \iff g'(u)-\frac{g(u)}{2 u} \geq \frac{M_{\infty}}{4}\\
 & \iff \partial_u\left(g(u) \exp\left(-\int_{u_+}^{u} \frac{dv}{2 v}\right)\right) \geq \frac{M_{\infty}}{4} \sqrt{\frac{u_+}{u}}\\
 & \iff g(u) \sqrt{\frac{u_+}{u}} - g(u_+) \geq \frac{M_{\infty}}{2} \sqrt{u_+} \left(\sqrt{u} - \sqrt{u_+}\right)\\
 & \iff g(u) \geq g(u_+) \sqrt{\frac{u}{u+}} + \frac{M_{\infty}}{2} \sqrt{u} \left(\sqrt{u} - \sqrt{u_+}\right).
\end{align*}
where all the inequalities on the right-hand side are considered for $u\geq u_+$.
The third line is obtained by integration between $u_+$ and $u$ on both sides of second line.
Let $K_l := \frac{1}{2}M_{\infty}$ and $K_s := u_+^{-1/2}g(u_+) - \frac{1}{2}M_{\infty}{\sqrt{u_+}} $.
We then obtain the condition
\begin{equation}\label{eq:Inequal3}
g(u) \geq K_s \sqrt{u} + K_l u, 
\qquad\text{for all }u\geq u_+.
\end{equation}
Note that $K_s$ remains non-negative if we increase $z_+$ or decrease $M_\infty$;
indeed $\lim_{z\uparrow\infty}(2 z/\Psi(z)) =2/\alpha_+$, so that the condition
$M_\infty \leq 2/\alpha_+$ is equivalent to
$M_\infty \leq 2 z/\Psi(z) = 2 g(u)/u$.
Finally let us translate condition~\eqref{eq:Inequal3} into conditions on~$\Psi$.
Fix $u\geq u_+$ and denote $z:=g(u)$, then 
$(z - K_l u)^2 \geq K_s^2 u$, 
which is equivalent to $K_l^2 u^2 - (K_s^2+2 K_l z) u + z^2 \geq 0$.
The discriminant is equal to $K_s^2(K_s^2+4 z K_l)$ and is clearly non-negative.
Condition~\eqref{eq:Inequal3} is therefore equivalent to
$$
\Psi(z) \notin \left[\frac{K_s^2+2 K_l z - K_s \sqrt{K_s^2+2 K_l z}}{2 K_l^2}, \frac{K_s^2+2 K_l z + K_s \sqrt{K_s^2+2 K_l z}}{2 K_l^2}\right].
$$
Given $z-K_l u\geq 0$ (equivalently $\Psi(z) \leq z/K_l$) we obtain
$\Psi(z) \leq \kappa^2+\lambda z - \kappa \sqrt{\kappa^2+\lambda z},$
where 
$\kappa:=K_s/\left(\sqrt{2}K_l\right)$ and $\lambda:=K_l^{-1}$.

\subsection{Proof of Proposition~\ref{prop:nonSVI2}}\label{sec:prop:nonSVI2}
The function~$f$ defined in~\eqref{eq:Ff} therefore reads $f(u) = \frac{(u+1) \E^{-u} - 1}{1-\E^{-u}}$, 
with $f(0)=0$
and is strictly decreasing from~$0$ to~$-1$.
Regarding the function~$F$, it is clearly continuous, increasing from $0$ to $1$ and 
$F(z)=|z|^\nu/\left(1+|z|^\nu\right)$ for all $z\ne 0$,
with $F(0)=0$.
Since $\theta_\cdot$ is increasing and $1+f(u)F(z)\geq 0$ for all $(u,z)\in\RR_+^*\times\RR$, the first coupling conditions in Proposition~\ref{prop:first_coupling} are satisfied, and the volatility surface is free of calendar spread arbitrage.
We now need to check the second coupling condition, namely Proposition~\ref{prop:Coupling2}. For $\nu\geq2$, $\Psi$ is $\mathcal{C}^2$, and we can indeed apply Proposition~\ref{prop:Coupling2}. 
Since~$\Psi$ is asymptotically linear, we can alternatively check Proposition~\ref{prop:noCSAStrictly}.
The equality 
\begin{equation}\label{eq:Phi}
\Phi_\nu(z)\equiv \frac{\Psi'_\nu(z)^2}{\Psi_\nu(z)}-2\Psi_\nu''(z) = 
\left(1+|z|^\nu\right)^{1/\nu-2}|z|^{\nu-2}\left(|z|^\nu-2(\nu-1)\right)
\end{equation}
holds for all $z\ne 0$ and hence the sets~$\overline{\Zz}_+$ and~$\overline{\Zz}_-$ 
defined in~\eqref{eq:SetZBarP} are equal to 
$\overline{\Zz}_-=[z_-^*,z_+^*]$
and $\overline{\Zz}_+=\RR\setminus [z_-^*,z_+^*]$,
where $z_\pm^*:=\pm [2(\nu-1)]^{1/\nu}$.
The two conditions in Proposition~\ref{prop:noCSAStrictly} read
\begin{align}
M_\infty^2 
& \displaystyle \leq \inf_{z \in \overline{\Zz}_- \bigcap \Zz_+(\theta_\infty)}
\frac{\left(1-\frac{z\Psi_\nu'(z)}{2\Psi_\nu(z)}\right)^2}
{\frac{1}{4 \theta_\infty} \Phi_n(z)+\frac{\Psi_\nu'(z)^{2}}{16}};\nonumber\\
(u \varphi(u))^2 
& \displaystyle \leq \inf_{z \in \overline{\Zz}_+} 
\frac{\left(1-\frac{z\Psi_\nu'(z)}{2\Psi_\nu(z)}\right)^2}
{\frac{1}{4 u} \Phi_n(z)+\frac{\Psi_\nu'(z)^{2}}{16}},
\quad\text{for any }u \in \RR_+^*.\label{eq:Ineq2}
\end{align}
From the proof of Proposition~\ref{prop:noCSAStrictly}, we know that when $\theta_\infty=\infty$, 
the first condition simplifies to
\begin{equation}\label{eq:Ineq1}
M_\infty \leq \inf_{z\in [z_-^*,z_+^*]}\left|\frac{4}{\Psi_\nu'(z)}-\frac{2z}{\Psi_\nu(z)}\right|.
\end{equation}
Now, immediate computations yield
$\displaystyle
\left|\frac{4}{\Psi_\nu'(z)}-\frac{2z}{\Psi_\nu(z)}\right|
= \left|\frac{2z}{|z|^\nu}\frac{2+|z|^\nu}{\left(1+|z|^\nu\right)^{1/\nu}}\right|,
$
which, as a function of $z$ is defined on $\RR^*$, is strictly increasing on~$\RR_-^*$ 
and strictly decreasing on~$\RR_+^*$.
Therefore, its infimum~$z_\nu$ over the interval $[z_-^*,z_+^*]$ is precisely attained at~$z_\pm^*$ 
(by symmetry) and is equal to
$4\nu(2\nu-2)^{(1-\nu)/\nu}(2\nu-1)^{-1/\nu}$.
Since by construction $M_\infty=\alpha$, Inequality~\eqref{eq:Ineq1} is thus equivalent to $\alpha \leq z_\nu$.
This inequality is clearly not true for any $\nu>1$ and $\alpha>0$;
however straightforward considerations show that there exists a unique $\nu^*>1$ 
such that the map $\nu\mapsto z_\nu$ is strictly increasing on $(1,\nu^*)$ 
and strictly decreasing on $(\nu^*,\infty)$ with $z_1=4$ and $\lim_{z\uparrow \infty} z_\nu = 2$.
Therefore the inequality $\alpha \leq z_\nu$ is satisfied for all $\nu>1$ if and only if $\alpha\leq 2$.

We now check the second inequality~\eqref{eq:Ineq2} above.
Straightforward computations show that
$$
\left(1-\frac{z\Psi_\nu'(z)}{2\Psi_\nu(z)}\right)^2
 = \frac{1}{4}\left(\frac{2+|z|^\nu}{1+|z|^\nu}\right)^2,
$$
which increases on~$\RR_-$ from~$\frac{1}{4}$ to~$1$ 
and decreases on~$\RR_+$ from~$1$ to~$\frac{1}{4}$.
The map $\Psi_\nu'(\cdot)^{2}$ is decreasing on~$\RR_-$, increasing on~$\RR_+$ 
and maps the real line to~$(0,1)$.
Therefore, for any $u>0$, $z \in \overline{\Zz}_+$, we have
$$
\left(\frac{1}{4 u}\Phi_n(z)+\frac{\Psi_\nu'(z)^{2}}{16}\right)^{-1}
\left(1-\frac{z\Psi_\nu'(z)}{2\Psi_\nu(z)}\right)^2
\geq 
\frac{1}{4}\frac{1}{\frac{1}{4 u}\Phi_\nu(z)+\frac{1}{16}}
= \left(\frac{\Phi_\nu(z)}{u}+\frac{1}{4}\right)^{-1},
$$
with $\Phi_\nu$ defined in~\eqref{eq:Phi}.
Now a quick look at the function $\Phi_\nu$ shows that 
it is bounded above by $\Phi_\nu(z_\nu^*) \in (0,1)$, 
with $z_\nu^*:=[\nu(\nu-1)-2+\sqrt{\nu(\nu-1)(\nu^2+3\nu-2)}]^{1/\nu}$.
Define the function $g_\alpha$ by
$g_\alpha(u)\equiv (u\varphi(u))^2 \left(\frac{1}{u}+\frac{1}{4}\right)$.
There exists a unique $u^*\approx 1.87$ such that~$g_\alpha$ is strictly increasing
on $(0,u^*)$ and strictly decreasing on $(u^*,\infty)$ with $g_\alpha(u^*) = g_1(u^*)\alpha^2$.
Setting $\overline{\alpha}:=g_1(u^*)^{-1/2}\approx 1.33$, the inequality $g_\alpha(u)\leq 1$
is clearly satisfied for any $u>0$ and all $\alpha \in (0,\overline{\alpha})$.
To conclude, note that for $1<\nu<2$, the second derivative has a mass at the origin, 
but $\Psi_\nu$ is convex which implies that this mass is positive and 
that $\Ll w(k,t)\geq0$ in the distributional sense following Section~\ref{sec:nonsmooth}.
Therefore the implied volatility surface is free of static arbitrage and the proposition follows.

\subsection{Proof of Proposition~\ref{prop:nonSVI1}}\label{sec:prop:nonSVI1}
The function~$f$ in~\eqref{eq:Ff} here reads $f(u) = \frac{(u+1) \E^{-u} - 1}{1-\E^{-u}}$, 
and is decreasing from~$0$ to~$-1$.
The function~$F$ in~\eqref{eq:Ff} is clearly continuous, increasing from~$0$ to~$1$ and 
$$
F(z)=\frac{|z| (4 \sqrt{1+|z|} + 1)}{2\sqrt{1+|z|}(2 |z|+1+\sqrt{1+|z|})}, 
\qquad\text{for all }z\in\RR^*,
$$
with $F(0)=0$.
By Proposition~\ref{prop:first_coupling}, straightforward computations then show that the volatility surface is free of calendar-spread arbitrage.
Now, for any $z\geq 0$, we have
$$\frac{\Psi'(z)^2}{\Psi(z)}-2\Psi''(z) = \frac{(16 z+19)\sqrt{1+z} + 12 z + 10}{16 (1+z)^{3/2} \Psi(z)},$$
which is a decreasing function of $z$ with limit equal to zero.
Therefore $\overline{\Zz}_+=\RR $, and for any $u\in \RR_+^*$, $\Zz_+(u)=\RR$.
Let us check that the generalised SVI surface~$w$ parameterised by the previous triplet~$(\theta,\varphi,\Psi)$ satisfies $\Ll w \geq 0$ as a distribution. 
Indeed we only checked that $\{\Ll w\}\geq 0$ as a function defined everywhere except at the origin
(where as usual in distribution notations, $\{\Ll w\}$ is a function defined where $w''$ is defined).
Here  $\Psi''= \{\Psi''\} + \frac{5}{2} \delta_0$ (where $\delta_0$ stands for the Dirac mass at the origin), so that
$\Ll w= \{\Ll w\} + 5 (\theta \phi(\theta))^2 \delta_0$, 
which is positive since $\{\Ll w\} \geq 0$.
Finally, 
$$
\frac{4}{\Psi'(z)}-\frac{2 z}{\Psi(z)} = 4 \frac{4 (z+1)^{3/2} + 3 z + 4}
{\left(4\sqrt{z+1}+1\right)\left(2z+1+\sqrt{z+1}\right)}
$$ 
decreases to $2\geq M_\infty$.
Since~$\Psi'(z)^2\geq 1$, the condition~$(u \varphi(u))^2 \leq 4$ suffices to prevent butterfly arbitrage.

\section{Lee's moment formula in the asymptotically linear case}
In Section~\ref{sec:Arbitrage} we stressed that, following Roper or the variant in Proposition~\ref{prop:nomass},
the positivity of the operator~$\Ll$ in~\eqref{eq:OpL} guarantees the existence of a martingale explaining market prices. 
As a consequence, the celebrated moment formula~\cite{Lee} holds:
\begin{theorem}[Roger Lee's moment formula~\cite{Lee}]\label{thm:LeeMoment}
Let $S_t$ represent the stock price at time $t$, assumed to be a non-negative random variable with positive and finite expectation.
Let $\widetilde{p}:=\sup\{p\geq 0: \mathbb{E}(S_t^{1+p})<\infty\}$ and $\beta:=\limsup_{k\uparrow\infty}k^{-1}w(k,t)$.
Then $\beta\in [0,2]$ and $\widetilde{p}=\frac{1}{2}\left(\frac{\beta}{4}-1+\frac{1}{\beta}\right)$.
\end{theorem}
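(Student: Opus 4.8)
The plan is to leverage the probabilistic setup constructed just after Proposition~\ref{prop:nomass}: given a volatility surface~$w$ satisfying the positivity of~$\Ll$ together with the SMB and LMB conditions, there exists a non-negative martingale $(S_t)$ whose Call prices are exactly $\mathrm{C}_{\BS}(\E^k, w(k,t))$, so Theorem~\ref{thm:LeeMoment} reduces to a statement purely about such a martingale. Fix~$t$ and write $x:=k$ for log-moneyness (the $x$ in the statement). First I would recall the two halves of Roger Lee's original moment formula~\cite{Lee}: (a) the \emph{right-tail} relation, which says that $\tilde p$, the critical moment exponent of $S_t$, and the right-wing asymptotic slope $\beta_R:=\limsup_{k\uparrow\infty} w(k,t)/k$ are linked by $\beta_R = \psi(\tilde p)$, where $\psi(p):=2-4\bigl(\sqrt{p^2+p}-p\bigr)$; and (b) the elementary fact that $\beta_R\in[0,2]$ always. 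The content to prove is then just the inversion of $\psi$ on $[0,2]$ and the identification $\beta=\beta_R$.

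The key steps, in order, are: (1) invoke~\cite[Theorem~3.2 / Theorem~3.4]{Lee} (or re-derive via the tail bound $\mathbb{P}(S_t>\E^k)\le \E^{-k}\mathbb{E}S_t = \E^{-k}$ for the easy direction, and the Black–Scholes tail asymptotics $\mathrm{C}_{\BS}(\E^k,w)\sim \ldots$ matched against the Call-price decay $\mathbb{E}(S_t-\E^k)_+$ for the sharp direction) to get $\beta_R=\psi(\tilde p)$ with $\beta_R\in[0,2]$; (2) observe that $\psi$ maps $[0,\infty]$ \emph{monotonically} onto $[0,2]$ — indeed $\psi(0)=2$, $\psi(p)\to 0$ as $p\to\infty$, and $\psi'(p)=-4\bigl(\tfrac{2p+1}{2\sqrt{p^2+p}}-1\bigr)<0$ since $2p+1>2\sqrt{p^2+p}$ by AM–GM — hence $\psi$ is a strictly decreasing bijection $[0,\infty)\to(0,2]$; (3) solve $\beta=\psi(p)$ explicitly for $p$: setting $q:=\sqrt{p^2+p}-p\in[0,\tfrac12]$ one has $\beta=2-4q$, so $q=\tfrac12(1-\beta/2)=\tfrac12-\beta/4$, and then from $q^2+2pq=p$ one gets $p = q^2/(1-2q)$; substituting $1-2q=\beta/2$ and $q^2=(\tfrac12-\beta/4)^2$ yields after simplification $\tilde p = \tfrac12\bigl(\tfrac{\beta}{4}-1+\tfrac1\beta\bigr)$, exactly the claimed formula (one checks consistency at the endpoints: $\beta=2\Rightarrow\tilde p=0$, $\beta\to0\Rightarrow\tilde p\to\infty$); (4) finally note $\beta=\beta_R$ by definition — in the asymptotically linear case of this paper this limsup is an honest limit equal to $\theta_t\varphi(\theta_t)\alpha_+$, but the formula holds with the limsup in full generality.

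The main obstacle is step~(1): the \emph{sharp} direction of Lee's formula, i.e. that the moment explosion exponent is \emph{exactly} $\psi^{-1}$ of the wing slope rather than merely bounded by it. This requires the two-sided comparison between the right-tail of the distribution of $S_t$ and the decay rate of out-of-the-money Call prices, via the precise Black–Scholes asymptotics $-\log\mathrm{C}_{\BS}(\E^k,\sigma^2 t)\sim \tfrac{k}{2}d_+^2$-type estimates; here it suffices to cite~\cite{Lee} directly, since nothing in our setting changes the argument — the martingale $(S_t)$ furnished by Proposition~\ref{prop:nomass} is an admissible input. The remaining steps, (2)–(4), are elementary calculus and algebra: monotonicity of~$\psi$ and the explicit inversion. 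I would therefore present steps (2)–(4) in full and reduce step~(1) to a reference to~\cite{Lee} combined with the existence of the explaining martingale established above.
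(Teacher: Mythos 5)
Your proposal is mathematically sound, but it is worth being clear about what the paper actually does with this statement. Theorem~\ref{thm:LeeMoment} is presented there purely as a quotation of~\cite{Lee} (no proof is given for it in the text), and the genuine mathematical content of that appendix is the \emph{subsequent} Proposition, which re-derives the moment formula in the asymptotically linear case by a completely different, self-contained analytic route: one writes $\mathbb{E}(X^{1+m})=\int_{\RR}\E^{(1+m)k}p_-(k)\,\D k$ with the explicit density $p_-$ of Proposition~\ref{prop:nomass}, factors the integrand as $f(k)\E^{-g(k)}$, and reads off integrability from the asymptotics $f(k)\sim \frac{4-\alpha^2}{16\sqrt{2\pi\alpha k}}$ and $g(k)/k\to \frac{(\alpha-2)^2-8m\alpha}{8\alpha}$, which gives the critical exponent $\frac{1}{2}\left(\frac{\alpha}{4}-1+\frac{1}{\alpha}\right)$ directly --- no appeal to Lee's tail/moment comparison is needed. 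Your route instead defers the only hard step (the two-sided identification $\beta_R=\psi(\tilde p)$) to~\cite{Lee} and supplies the elementary inversion of $\psi$; that algebra is correct (your computation $q=\tfrac12-\tfrac{\beta}{4}$, $p=q^2/(1-2q)$, and the monotonicity of $\psi$ via $2p+1>2\sqrt{p^2+p}$ all check out, and indeed the inverted form is exactly how the theorem is stated here). The trade-off: your argument covers the general $\limsup$ formulation with no regularity hypotheses, but is essentially a proof by citation; the paper's derivation is fully self-contained but only valid when $\Psi$ is asymptotically linear, in which case the $\limsup$ is an honest limit $\alpha=\theta_t\varphi(\theta_t)\alpha_+$. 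If your intent was to reproduce the paper's reasoning rather than Lee's, the missing ingredient is the direct computation of the moment integral against $p_-$.
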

We show here that, at least in the asymptotically linear case (Definition~\ref{def:Linear}), 
this moment formula can be derived in a purely analytic fashion.
\begin{proposition}
Consider a $\mathcal{C}^2(\RR)$ function $v$ satisfying the following conditions:
\begin{enumerate}
\item $v(k)>0$ and $\Ll v(k) \geq 0$ for all $k \in \mathbb{R}$;
\item $\lim_{k\uparrow \infty} v'(k) = \alpha \in (0,2)$;
\item $\lim_{k\uparrow \infty} v''(k)=0$.
\end{enumerate}
Let~$X$ be a random variable with density~$p_-$, associated to~$v$ by Proposition~\ref{prop:nomass}.
Then $\mathbb{E}(X)=1$ and 
$\displaystyle \sup \{m\geq 0 : \mathbb{E}(X^{1+m}) < \infty \} 
= \frac{1}{2}\left(\frac{\alpha}{4}-1+\frac{1}{\alpha}\right).$
\end{proposition}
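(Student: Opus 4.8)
The plan is to compute $\sup\{m\ge0:\mathbb{E}(X^{1+m})<\infty\}$ directly from the explicit density $p_-$, by extracting its tail behaviour at $+\infty$ from the asymptotic-linearity hypotheses (2)--(3). First I would make sure we are genuinely in the setting of Proposition~\ref{prop:nomass}: that $v$ satisfies the LMB and SMB conditions, so that $p_-$ is a probability density on $\RR$, $X$ is the associated stock price, $\mathbb{E}(X^{1+m})=\int_\RR\E^{(1+m)k}p_-(k)\,\D k$, and $\mathbb{E}(X)=\int_\RR\E^k p_-(k)\,\D k=1$ by parts (1)--(2) of that proposition. The LMB condition is automatic: assumption (2) gives $v(k)/k\to\alpha$, hence $d_+(k,v(k))=-k/\sqrt{v(k)}+\sqrt{v(k)}/2\sim\sqrt{k}\,(\alpha-2)/(2\sqrt{\alpha})\to-\infty$ because $\alpha<2$. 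For SMB, the point is that $\Ll v\ge0$ on all of $\RR$ controls the left tail of $v$ (it rules out superlinear growth, and linear growth with asymptotic slope $\ge2$, as one sees from the same algebra that produces $\Ll v\to\frac14-\frac{a^2}{16}$ for a slope-$a$ tail), which forces $d_-(k,v(k))=-k/\sqrt{v(k)}-\sqrt{v(k)}/2\to+\infty$; if one prefers, this condition is simply to be read into the reference to Proposition~\ref{prop:nomass}.

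Next I would split $\mathbb{E}(X^{1+m})=\int_{-\infty}^{0}\E^{(1+m)k}p_-(k)\,\D k+\int_{0}^{\infty}\E^{(1+m)k}p_-(k)\,\D k$. For $m\ge0$ the first integral is at most $\int_{-\infty}^{0}p_-(k)\,\D k\le1$, so finiteness of the moment is equivalent to finiteness of the second integral, i.e.\ governed solely by the tail of $p_-$ at $+\infty$.

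To describe that tail I would substitute the asymptotics into $p_-(k)=(2\pi v(k))^{-1/2}\exp\bigl(-\tfrac12 d_-^2(k,v(k))\bigr)\Ll v(k)$. From $v(k)\sim\alpha k$, $v'(k)\to\alpha$, $v''(k)\to0$ one gets $\Ll v(k)\to\frac14-\frac{\alpha^2}{16}=\frac{4-\alpha^2}{16}\in(0,\infty)$, $(2\pi v(k))^{-1/2}\sim(2\pi\alpha k)^{-1/2}$, and, using the identity $d_-^2(k,w)=k^2/w+k+w/4$, $d_-^2(k,v(k))=\bigl(\tfrac1\alpha+1+\tfrac\alpha4\bigr)k\,(1+o(1))=\tfrac{(2+\alpha)^2}{4\alpha}\,k+o(k)$. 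Hence $\E^{(1+m)k}p_-(k)=k^{-1/2}\exp\bigl(\bigl(1+m-\tfrac{(2+\alpha)^2}{8\alpha}\bigr)k+o(k)\bigr)$ up to a positive multiplicative constant. A routine squeeze on the $o(k)$ term (it is $\le\varepsilon k$ eventually, for every $\varepsilon>0$) then shows the $+\infty$-tail integral converges when $1+m<\tfrac{(2+\alpha)^2}{8\alpha}$ and diverges when $1+m>\tfrac{(2+\alpha)^2}{8\alpha}$. Therefore $\sup\{m\ge0:\mathbb{E}(X^{1+m})<\infty\}=\tfrac{(2+\alpha)^2}{8\alpha}-1=\tfrac{(2-\alpha)^2}{8\alpha}=\tfrac12\bigl(\tfrac\alpha4-1+\tfrac1\alpha\bigr)$; note this value is strictly positive for $\alpha\in(0,2)$, consistent with $\mathbb{E}(X)=1<\infty$.

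The asymptotic expansions are all routine. The two steps that deserve care are the verification of the SMB condition---the only place where the global hypothesis $\Ll v\ge0$, rather than just its behaviour near $+\infty$, is really needed, and which I expect to be the main obstacle---and the rigorous handling of the sub-exponential correction $o(k)$ in the exponent, which is what pins down the exact threshold $\tfrac{(2+\alpha)^2}{8\alpha}$ rather than merely its order of magnitude.
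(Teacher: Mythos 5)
Your proposal is correct and follows essentially the same route as the paper: substitute $v(k)\sim\alpha k$, $v'(k)\to\alpha$, $v''(k)\to 0$ into $p_-$, extract the exponential rate $\tfrac{(2+\alpha)^2}{8\alpha}$ from $d_-^2$, and read off the threshold $m=\tfrac{(2-\alpha)^2}{8\alpha}=\tfrac12\bigl(\tfrac\alpha4-1+\tfrac1\alpha\bigr)$. You are in fact somewhat more careful than the paper on the two points you flag (the SMB condition, which the paper asserts without justification from hypotheses that only concern $k\uparrow\infty$, and the control of the $o(k)$ correction in the exponent), so no changes are needed.
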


\begin{proof}
Condition~(1) implies Proposition~\ref{prop:nomass}(I), 
and Conditions~(2) and~(3) imply the SMB and LMB limits in Proposition~\ref{prop:nomass}(II)-(III). 
Therefore by Proposition~\ref{prop:nomass}, the centred probability density~$p_-$ is well defined on $\RR$ and, 
for any $m\in\RR$, we have 
$\E^{(1+m)k} p_-(k) = f(k) \E^{-g(k)}$, where
$$
f(k) \equiv \left(2 \pi v(k)\right)^{-1/2}\Ll v(k)
\qquad\text{and}\qquad
g(k) \equiv \frac{1}{2} \left(\frac{k^2}{v(k)} + \frac{v(k)}{4} + k\right) - (1+m)k.
$$
As $k$ tends to infinity, straightforward computations show that 
$f(k) \sim \frac{4-\alpha^2}{16\sqrt{2\pi \alpha k}}$
and 
$\lim_{k\uparrow \infty}\frac{g(k)}{k}
 = \frac{(\alpha-2)^2-8m\alpha}{8\alpha}:=\frac{P_m(\alpha)}{\alpha}$.
Since~$P_m$ is a second-order strictly convex polynomial with $P_m(0)>0$,
the function $k\mapsto \E^{(1+m)k} p_-(k)$ is integrable as long as $P_m(\alpha)>0$, i.e.
$\alpha<2-4\left(\sqrt{m^2+m}-m\right)$, or
$m<\frac{\alpha}{8}-\frac{1}{2}+\frac{1}{2 \alpha}$.
In other words, we have proved that
$\sup\left\{m>0 : \mathbb{E}\left(X^{1+m}\right)<\infty \right\} 
= \frac{1}{2}\left(\frac{\alpha}{4}-1+\frac{1}{\alpha}\right)$.
\end{proof}


\end{document}